
\documentclass[10pt,conference]{IEEEtran}
\usepackage{epsfig,endnotes}
\usepackage{amsmath,algorithmic,algorithm}
\usepackage{comment}
\usepackage{color}
\usepackage{version}

\includeversion{tech-report}

\newtheorem{mydef}{Definition}
\newtheorem{mythm}{Theorem}
\newtheorem{mycor}{Corollary}

\begin{document}

\title{\Large \bf Preserving Link Privacy in Social Network Based Systems }

\author{
{\rm Prateek Mittal, Charalampos Papamanthou, Dawn Song}\\
University of California, Berkeley \\
\{pmittal,cpap,dawnsong\}@eecs.berkeley.edu \\
} 

\maketitle

\begin{abstract}
A growing body of research leverages social network based trust relationships to improve the functionality of the system. However, these systems expose users' trust relationships, which is considered sensitive 
information in today's society, to an adversary. 

In this work, we make the following contributions. First, we 
propose an algorithm that perturbs the structure of a social 
graph in order to provide link privacy, at the cost of slight 
reduction in the utility of the social graph. Second we define 
general metrics for characterizing the utility and privacy of 
perturbed graphs. Third, we evaluate the utility and privacy of 
our proposed algorithm using real world social graphs. 
Finally, we demonstrate the applicability of our perturbation 
algorithm on a broad range of secure systems, including Sybil defenses  
and secure routing.
\end{abstract}

\section{Introduction}

In recent years, several proposals have been put forward 
that leverage user's social network trust relationships 
to improve system security and privacy. Social networks 
have been used for Sybil defense~\cite{sybilguard,sybillimit,sybilinfer,mohaisen:infocom11,tran:infocom11}, 
secure routing~\cite{whanau:nsdi10,x-vine,sprout}, secure reputation systems~\cite{sumup:nsdi09}, mitigating spam~\cite{ostra:nsdi08}, 
censorship resistance~\cite{kaleidoscope}, and anonymous communication~\cite{drac,nagaraja:pet07}.

A significant barrier to the deployment of these systems is that 
they do not protect the privacy of user's trusted social contacts. 
Information about user's trust relationships is considered sensitive 
in today's society; in fact, existing online social networks such 
as Facebook, Google+ and Linkedin provide explicit mechanisms to 
limit access to this information. A recent study by Dey et al.~\cite{dey:nyu11} 
found that more than 52\% of Facebook users hide their social contact information. 

Most protocols that leverage social networks for system 
security and privacy either explicitly reveal users' trust relationships 
to an adversary~\cite{sybilinfer} or allow the adversary to easily perform traffic 
analysis and infer these trust relationships~\cite{sybillimit}. Thus the design of 
these systems is fundamentally in conflict with the current online 
social network paradigm, and hinders deployment.

In this work, we focus on protecting the privacy of users' 
trusted contacts (edge/link privacy, not vertex privacy) 
while still maintaining the utility of higher level systems and 
applications that 
leverage the social graph. Our key insight in this work is that 
for a large class of security applications that leverage social 
relationships, preserving the exact set of edges in the graph 
is not as important as preserving the graph-theoretic structural differences 
between the honest and dishonest users in the system. 

This insight motivates a paradigm of \emph{structured graph perturbation}, 
in which we introduce noise in the social graph (by deleting real edges 
and introducing fake edges) such that the local structures in the original 
social graph are preserved. We believe that for many applications, 
introducing a high level of noise in such a structured fashion does not 
reduce the overall system utility.  

\subsection{Contributions}
In this work, we make the following contributions. 

\begin{itemize}
\item First, we propose a mechanism based on random walks for 
perturbing the structure of the social graph that provides link privacy at the cost of 
a slight reduction in application utility (Section~\ref{sec:protocol}). 

\item We define a general metric for characterizing the utility of 
perturbed graphs. Our utility definition considers the change in graph 
structure from the perspective of a vertex. We formally relate our 
notion of utility to global properties of social graphs, such as 
mixing times and second largest eigenvalue modulus of graphs, and 
analyze the utility properties of our perturbation mechanism 
using real world social networks (Section~\ref{sec:utility}).

\item We define several metrics for characterizing link privacy, and 
consider prior information that an adversary may have for de-anonymizing 
links. We also formalize the relationship between utility and privacy of 
perturbed graphs, and analyze the privacy properties of our perturbation 
mechanism using real world social networks (Section~\ref{sec:privacy}).

\item Finally, we experimentally demonstrate the real world applicability of our 
perturbation mechanism on a broad range of secure systems, including 
Sybil defenses and  secure routing (Section~\ref{sec:applications}).
 In fact, we find that for Sybil defenses, our techniques are of interest even 
outside the context of link privacy.
\end{itemize}

\section{Related Work}

Work in this space can be broadly classified into two categories: 
(a) mechanisms for protecting the privacy of links between 
labeled vertices, and (b) mechanisms for protecting node/graph privacy 
when vertices are unlabeled. The focus of this work is on protecting 
the privacy of relationships among labeled vertices. 

\subsection{Link privacy between labeled vertices} There are two main 
mechanisms for preserving link privacy between labeled vertices. The 
first approach is to perform clustering of vertices and edges, and 
aggregate them into super vertices (e.g.,~\cite{Hay:2008:RSR:1453856.1453873} and~\cite{Zheleva:2007:PPS:1793474.1793485}). In this way, information about 
corresponding sub-graphs can be anonymized. While these clustering 
approaches permit analysis of some macro-level graph properties, they 
are not suitable for black-box application of existing social network 
based applications, such as Sybil defenses. The second class of approaches 
aim to introduce perturbation in the social graph by adding and deleting 
edges and vertices. Next, we discuss this line of research in more 
detail. 

Hay et al.~\cite{hay:umass07} propose a perturbation algorithm which applies a 
sequence of $k$ edge deletions followed by $k$ random 
edge insertions. Candidates for edge deletion are sampled uniformly 
at random from the space of existing edges in graph $G$, while candidates for 
edge insertion are sampled uniformly at \emph{random} from the space of 
edges not in $G$. The key difference between our perturbation mechanism 
and that of Hay et al. is that we sample edges for insertion based on 
the \emph{structure} of the original graph (as opposed to random selection). 
We will compare our approach with that of Hay et al. in 
Section~\ref{sec:privacy}.

Ying and Wu~\cite{ying:sdm08} study the impact of Hay et al.'s perturbation 
algorithms~\cite{hay:umass07} on the spectral properties of graphs, as well as on link privacy. 
They also propose a new perturbation algorithm that aims to preserve 
the spectral properties of graphs, but do not analyze its privacy 
properties. 

Korolova et al.~\cite{Korolova:2008:LPS:1458082.1458123} show that link privacy of %
the overall social network can be breached even if information about the local 
neighborhood of social network nodes is leaked (for example, via a look-ahead feature for 
friend discovery). 

\subsection{Anonymizing the vertices}
Although the techniques described above reveal the \emph{identity} of the vertices in 
the social graph but add noise to the \emph{relationships} between them, there have been 
various works in the literature that aim at anonymizing the identities of the 
nodes in the social network. This line of research is orthogonal to our goals, but we 
describe them for completeness.

The straightforward approach of just removing the identifiers of the nodes before publishing the social graph does not always guarantee privacy, as shown by Backstrom et. al.~\cite{Backstrom:2007:WAT:1242572.1242598}. To deal with this problem, Liu and Terzi~\cite{Liu:2008:TIA:1376616.1376629} propose a systematic framework for identity anonymization on graphs, where they introduce the notion of $k$-degree anonymity. Their goal is to minimally modify the graph by changing the degrees of specially-chosen nodes so that the identity of each individual involved is protected. An efficient version of their algorithm was recently implemented by Lu et al.~\cite{fast-identity}.

Another notion of graph anonymity in social networks is presented by Pei and Zhou~\cite{DBLP:conf/icde/ZhouP08}: A graph is $k$-anonymous if for every node there exist at least $k-1$ other nodes that share isomorphic neighborhoods. This is a stronger definition than the one in~\cite{Liu:2008:TIA:1376616.1376629}, where only vertex degrees are considered. 

Zhou and Pei~\cite{DBLP:journals/kais/0002P11} recently introduced another notion called $l$-diversity for social network anonymization. In this case, each vertex
is associated with some non-sensitive attributes and some
sensitive attributes. Maintaining the privacy of the individual in this scenario is based on the adversary not being able (with high probability) to re-identify the sensitive attribute values of the individual.

Finally, Narayanan and Shmatikov~\cite{DBLP:conf/sp/NarayananS09} show some of the weaknesses of the above anonymization techniques,propose a generic way for modeling the release of anonymized social networks and report on successful de-anonymization attacks on popular networks such as Flickr and Twitter.

\subsection{Differential privacy and social networks}
Sala et al.~\cite{pigmalion} use differential privacy (a more elaborate tool of adding noise) to publish 
social networks with privacy guarantees. Given a social network and a desired level of differential privacy 
guarantee, they extract a detailed structure into degree correlation statistics, introduce noise into the 
resulting dataset, and generate a new synthetic social network with differential privacy. However, their 
approach does not preserve utility of the social graph from the perspective of a vertex (since vertices in 
their graph are unlabeled), and thus cannot be used for many real world applications such as Sybil defenses. %

Also, Rastogi et al.~\cite{Rastogi:2009:RPO:1559795.1559812} introduce a relaxed notion of differential privacy 
for data with relationships so that more expressive queries (e.g., joins) can be supported without hurting utility very much.

\subsection{Link privacy preserving applications}

X-Vine~\cite{x-vine} proposes to perform DHT routing using social links, in 
a manner that preserves the privacy of social links. However, the threat model 
in X-Vine excludes adversaries that have prior information about the social graph. 
Thus in real world settings, X-Vine is vulnerable to the Narayanan-Shmatikov 
attack~\cite{DBLP:conf/sp/NarayananS09}. Moreover the techniques in X-Vine are 
specific to DHT routing, and cannot be used to design a general purpose 
defense mechanism for social network based applications, which is the focus 
of this work.

\section{Basic Theory}

Before we introduce our perturbation mechanism, we present some 
notation and background on graph theory needed to understand the 
paper. 

Let us denote the social graph as $G=(V,E)$, comprising the set of 
vertices $V$ (wlog assume the vertices have labels $1,\ldots,n$), and the set of edges $E$, where $|V|=n$ and $|E|=m$. The focus of this paper 
is on undirected graphs, where the edges are symmetric. Let $A_G$ denote the 
$n\times n$ \emph{adjacency matrix} corresponding to the graph $G$, namely if $(i,j) \in E$, then $A_{ij}=1$, otherwise $A_{ij}=0$. 

A \emph{random walk} on a graph $G$ starting at a vertex $v$ is a 
sequence of vertices comprising a random neighbor $v_1$ of $v$, 
then a random neighbor $v_2$ of $v_1$ and so on. A random 
walk on a graph can be viewed as a Markov chain. We denote 
the \emph{transition probability matrix} of the 
random walk/Markov chain as $P$, given by: 

\begin{equation}
P_{ij}  = 
\begin{cases}
\frac{1}{\mathsf{deg}(i)} & \text{ if }(i,j)\text{ is an edge in $G$}\,, \\
0 &  \text{  otherwise}\,.
\end{cases}
\end{equation}
where $\mathsf{deg}(i)$ denotes the degree of the vertex $i$. At any given iteration 
$t$ of the random walk, let us denote with $\pi(t)$ the probability 
distribution of the random walk state at that iteration ($\pi(t)$ is a vector of $n$ entries). The state 
distribution after $t$ iterations is given by $\pi(t) = \pi(0) \cdot P^t$, 
where $\pi(0)$ is the initial state distribution. The probability of a 
$t$-hop random walk starting from $i$ and ending at $j$ is given 
by $P_{ij}^t$.  

For irreducible and aperiodic graphs (which undirected and connected 
\emph{social graphs} are), the corresponding Markov chain is ergodic, and
 the state distribution of the random walk $\pi(t)$ converges to 
a unique stationary distribution denoted by $\pi$. The stationary 
distribution $\pi$ satisfies $\pi = \pi \cdot P$.

For undirected and connected social graphs, we can see that the 
probability distribution $\pi_i = \frac{\mathsf{deg}(i)}{2 \cdot m}$ 
satisfies the equation $\pi=\pi \cdot P$, and is thus the unique 
stationary distribution of the random walk.

Let us denote the eigenvalues of $A$ as $\lambda_1 \geq \lambda_2 \geq \ldots \geq\lambda_{n}$, 
and the eigenvalues of $P$ as $\nu_1 \geq \nu_2 \geq \ldots \geq \nu_{n}$. The eigenvalues of 
both $A$ and $P$ are real. We denote the second largest eigenvalue modulus 
(SLEM) of the transition matrix $P$ as $\mu = \max(|\nu_2|,|\nu_{n}|)$. The eigenvalues of 
matrices A and P are closely related to structural properties of graphs, and are 
considered utility metrics in the literature.

\section{Structured Perturbation}
\label{sec:protocol}

\subsection{System Model and Goals}

For the deployment of secure applications that leverage user's trust relationships, we 
envision a scenario where these applications bootstrap user's trust relationships using 
existing online social networks such as Facebook or Google+.  

However, most applications that leverage this information do not make any attempt 
to hide it; thus an adversary can exploit protocol messages to learn the entire 
social graph. 

Our vision is that OSNs can support these applications while protecting the link 
privacy of users by introducing noise in the social graph. Of course the addition of 
noise must be done in a manner that still preserves application utility. Moreover the 
mechanism for introducing noise should be computationally efficient, and must not present 
undue burden to the OSN operator.

We need a mechanism that takes the social graph $G$ as an input, and produces a transformed 
graph $G' = (V,E')$, such that the vertices in $G'$ remain the same as the original 
input graph $G$, but the set of edges is perturbed to protect link privacy. The constraint 
on the mechanism is that application utility of systems that leverage the perturbed 
graph should be preserved. Conventional metrics of utility include degree sequence and 
graph eigenvalues; we will shortly define a general metric for utility of perturbed graphs 
in the following section. There is a tradeoff between privacy of links in the social 
graph and the utility derived out of perturbed graphs. As more and more noise is added 
to the social graph, the link privacy increases, but the corresponding utility decreases.

\subsection{Perturbation Algorithm}

Let $t$ be the parameter that governs how much noise we wish to inject in the social graph. 
We propose that for each node $u$ in graph $G$, we perturb \emph{all} of $u$'s contacts as 
follows. Suppose that node $v$ is a social contact of node $u$. Then we perform a random 
walk of length $t-1$ starting from node $v$. Let node $z$ denote the terminus point 
of the random walk. Our main idea is that instead of the edge $(u,v)$, we will introduce the 
edge $(u,z)$ in the graph $G'$. It is possible that the random walk terminates at either node $u$ 
itself, or that node $z$ is already a social contact of $u$ in the transformed graph $G'$ (due to 
a previously added edge). To avoid self loops and duplicate edges, we perform another random walk
from vertex $v$ until a suitable terminus vertex is found, or we reach a threshold number of tries, 
denoted by parameter $M$. For undirected graphs, the algorithm described so far would double the number 
of edges in the perturbed graphs: for each edge $(u,v)$ in the original graph, an edge would be added 
between a vertex $u$ and the terminus point of the random walk from vertex $v$, as well as between 
vertex $v$ and the terminus point of the random walk from vertex $u$. To preserve the degree distribution, 
we could add an edge between vertex $u$ and vertex $z$ in the transformed graph with probability $0.5$.
However this could lead to low degree nodes becoming disconnected from the social graph with non-trivial 
probability. To account for this case, we add the first edge corresponding to the vertex $u$ with probability 
$1$, while the remaining edges are accepted with a reduced probability to preserve the degree distribution. 
The overall algorithm is depicted in Algorithm~\ref{alg:transform}. The computational complexity of our 
algorithm is $O(m)$.

\begin{algorithm}
\caption{$\textsf{Transform}(G,t,M)$: Perturb undirected graph $G$ using perturbation $t$ and maximum loop count $M$.}
\label{alg:transform}
\begin{algorithmic}
\STATE $G' = \mathsf{null}$;
\STATE  {\bf foreach} vertex $u$ in $G$
\STATE \hspace{0.05in} let $count=1$; 
\STATE \hspace{0.05in} {\bf foreach} neighbor $v$ of vertex $u$
\STATE \hspace{0.1in} let $loop=1$;
\STATE \hspace{0.1in} {\bf do}
\STATE \hspace{0.15in} perform $t-1$ hop random walk from vertex $v$;
\STATE \hspace{0.15in} let $z$ denote the terminal vertex of the random walk;
\STATE \hspace{0.15in} $loop++$;
\STATE \hspace{0.1in} {\bf until} ($u = z$ $\vee$  $(u,z)\in G'$) $\wedge$  ($loop \leq M$)
\STATE \hspace{0.1in} {\bf if} $loop \le M$
\STATE \hspace{0.15in} {\bf if} $count = 1$
\STATE \hspace{0.2in} add edge $(u,z)$ in $G'$
\STATE \hspace{0.15in} {\bf else}
\STATE \hspace{0.2in} let $\mathsf{deg}(u)$ denote degree of $u$ in $G$;
\STATE \hspace{0.2in} add edge $(u,z)$ in $G'$ with probability $\frac{0.5\times \mathsf{deg}(u)-1}{\mathsf{deg}(u)-1}$;
\STATE \hspace{0.05in} $count++$;
\STATE return $G'$;
\end{algorithmic}
\end{algorithm}

\begin{figure*}[ht]
\centering
\mbox{
\hspace{-0.2in}
\hspace{-0.12in}
\begin{tabular}{c}
\psfig{figure=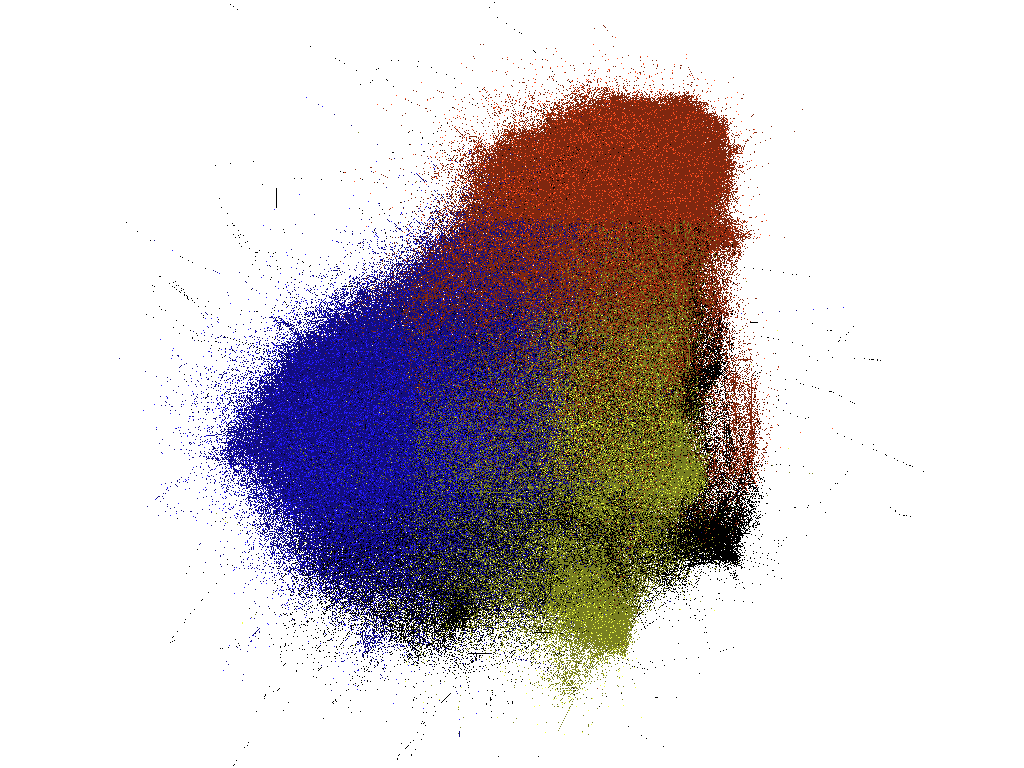,width=0.33 \textwidth}\vspace{-0.00in}\\
{(a)}
\end{tabular}
\begin{tabular}{c}
\psfig{figure=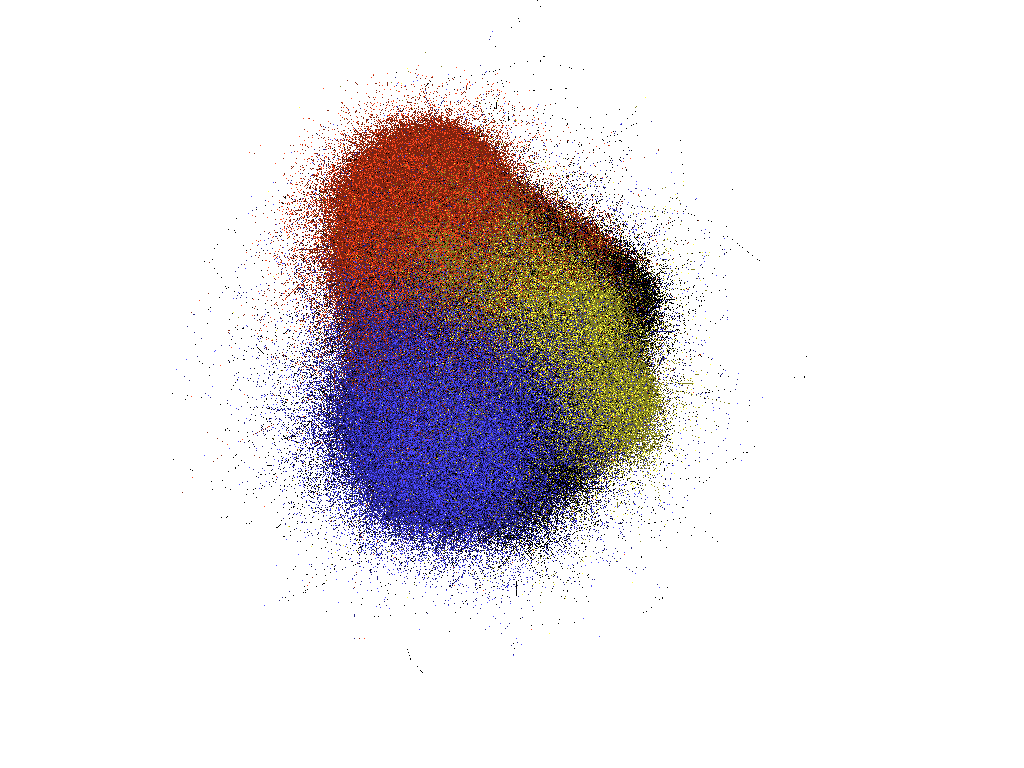,width=0.33 \textwidth}\vspace{-0.00in}\\
{(b)}
\end{tabular}
\hspace{-0.2in}
\begin{tabular}{c}
\psfig{figure=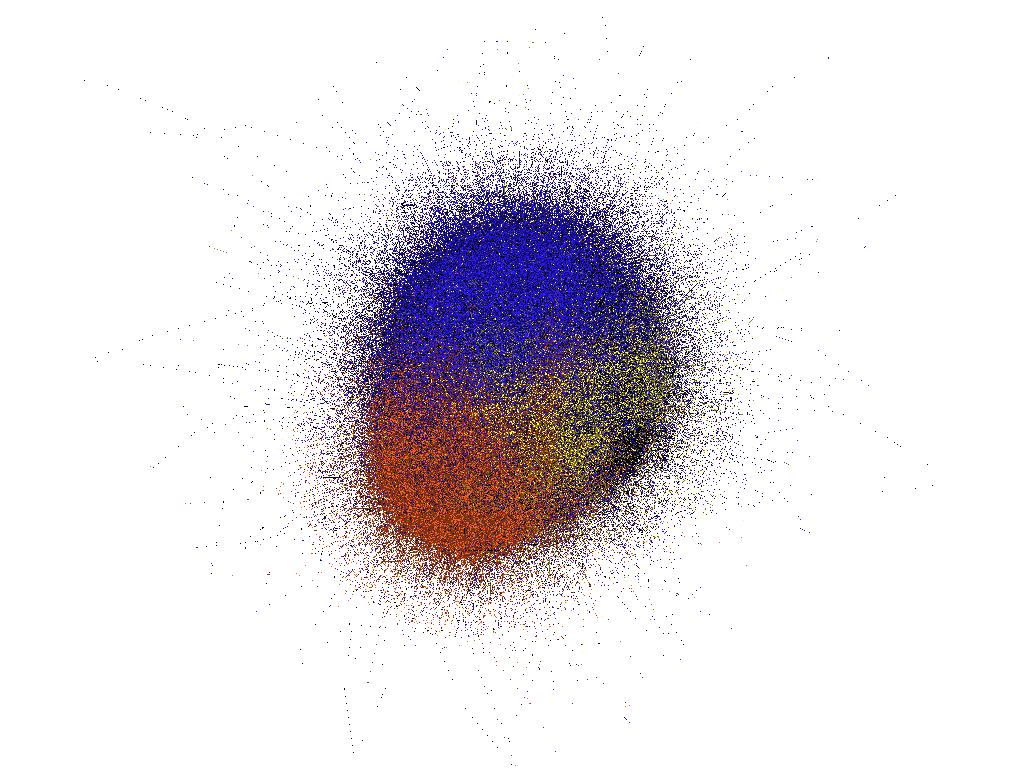,width=0.33 \textwidth}\vspace{-0.00in}\\
{(c)}
\end{tabular}
}
\mbox{
\hspace{-0.2in}
\hspace{-0.12in}
\begin{tabular}{c}
\psfig{figure=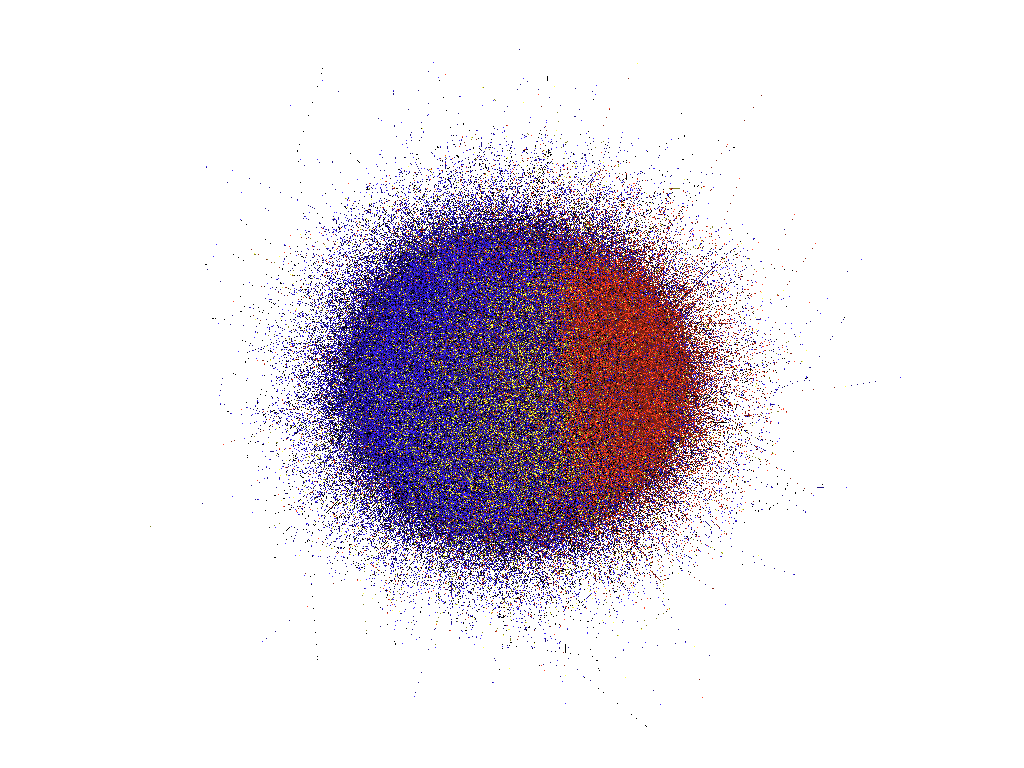,width=0.33 \textwidth}\vspace{-0.00in}\\
{(d)}
\end{tabular}
\begin{tabular}{c}
\psfig{figure=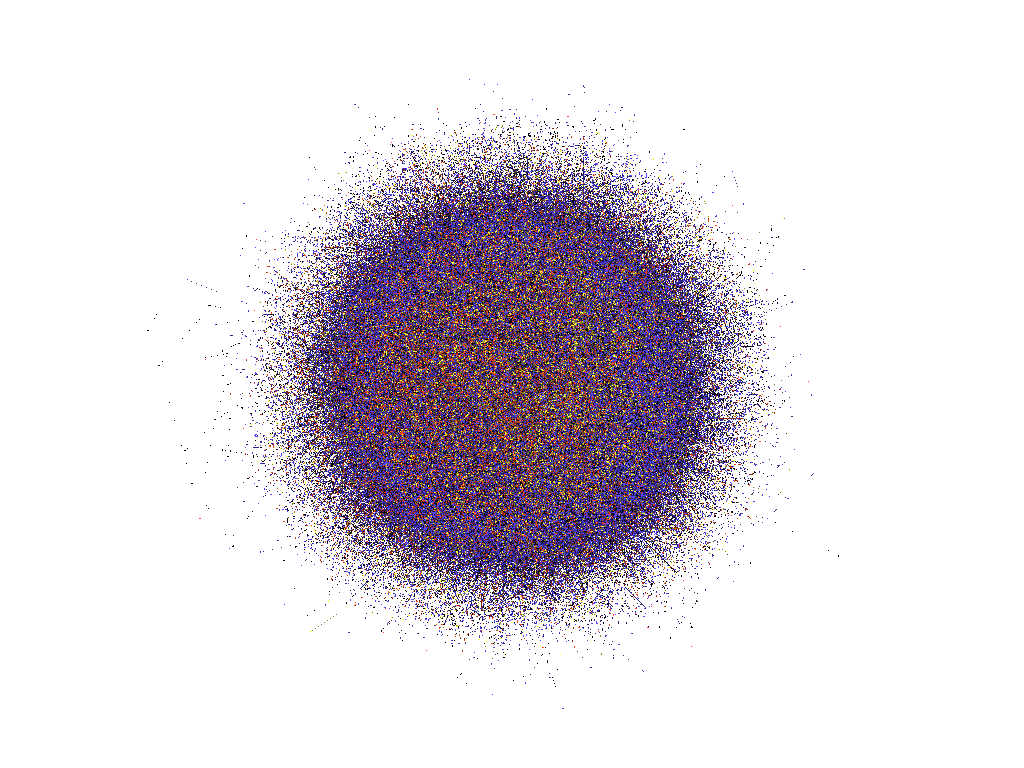,width=0.33 \textwidth}\vspace{-0.00in}\\
{(e)}
\end{tabular}
}
\caption{Facebook dataset link topology (a) Original graph (b) Perturbed, t=5 (c) t=10 (d) t=15, and (e) t=20. 
The color coding in (a) is derived using a modularity based community detection algorithm. For the remaining figures, the color 
coding of vertices is same as in (a). We can see that short random walks preserve the community structure of the social graph, while introducing a significant amount of noise. }
\label{fig:facebook-links}
\end{figure*}

\begin{figure*}[ht]
\centering
\mbox{
\hspace{-0.2in}
\hspace{-0.12in}
\begin{tabular}{c}
\psfig{figure=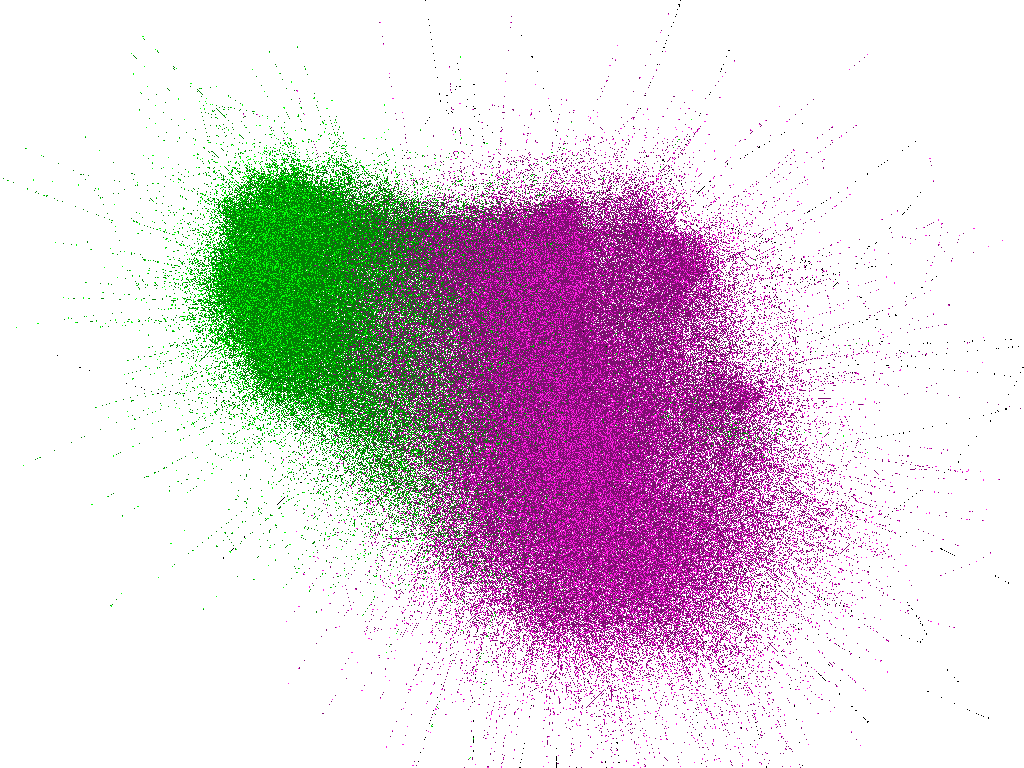,width=0.33 \textwidth}\vspace{-0.00in}\\
{(a)}
\end{tabular}
\begin{tabular}{c}
\psfig{figure=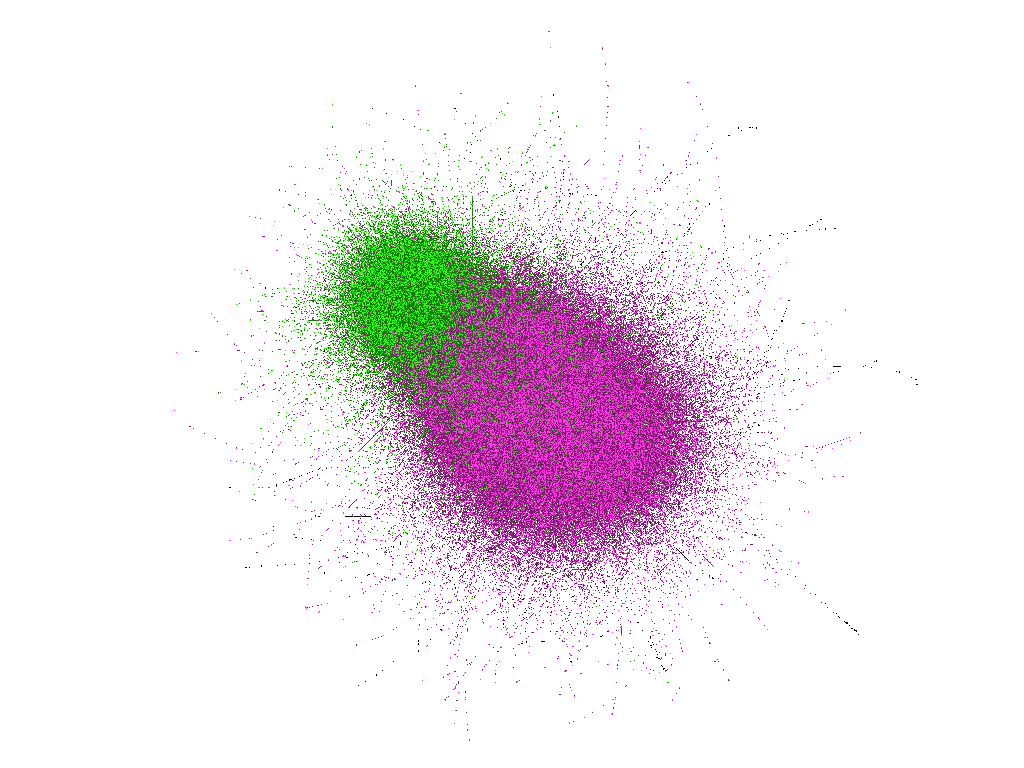,width=0.33 \textwidth}\vspace{-0.00in}\\
{(b)}
\end{tabular}
\hspace{-0.2in}
\begin{tabular}{c}
\psfig{figure=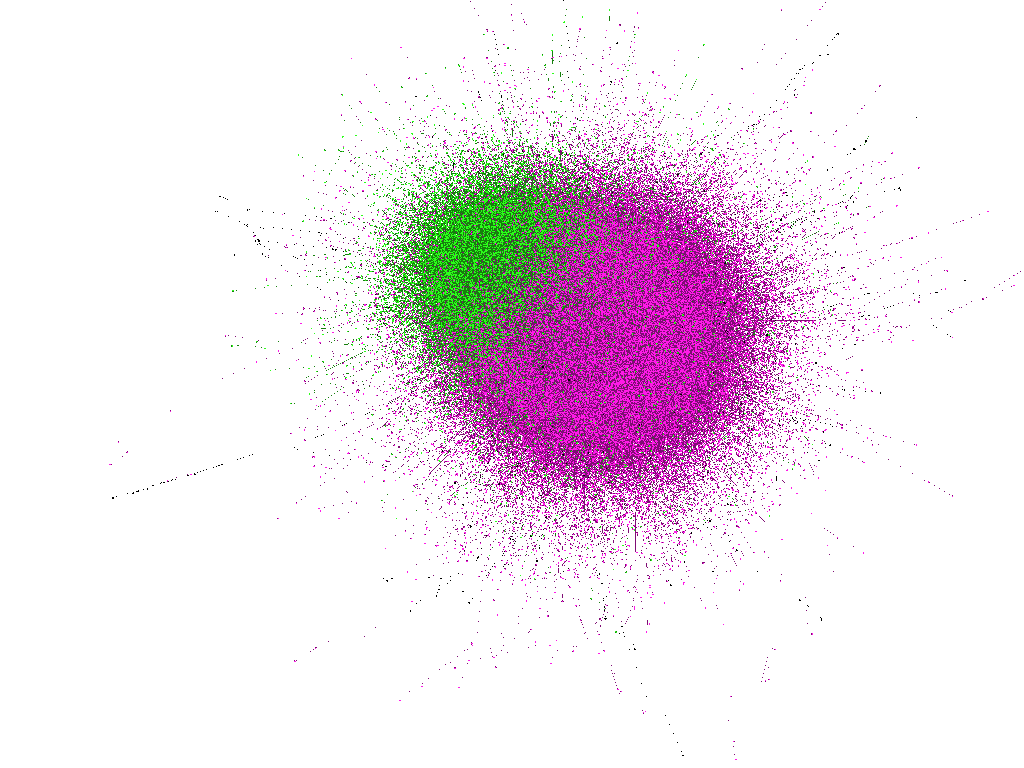,width=0.33 \textwidth}\vspace{-0.00in}\\
{(c)}
\end{tabular}
}
\mbox{
\hspace{-0.2in}
\hspace{-0.12in}
\begin{tabular}{c}
\psfig{figure=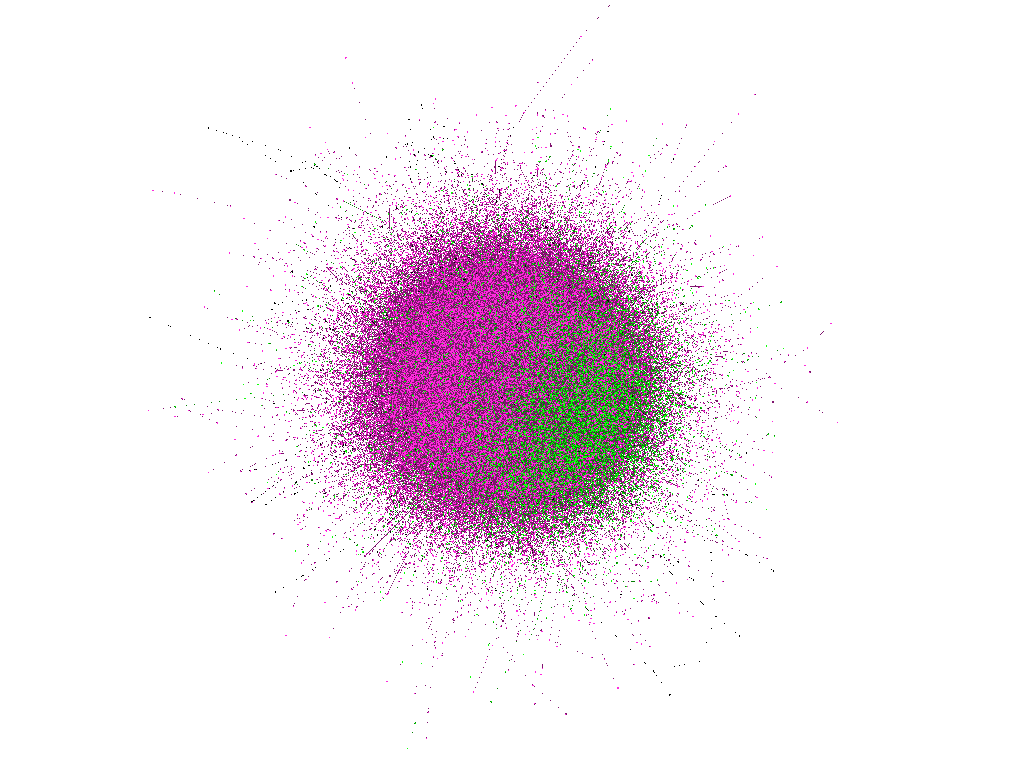,width=0.33 \textwidth}\vspace{-0.00in}\\
{(d)}
\end{tabular}
\begin{tabular}{c}
\psfig{figure=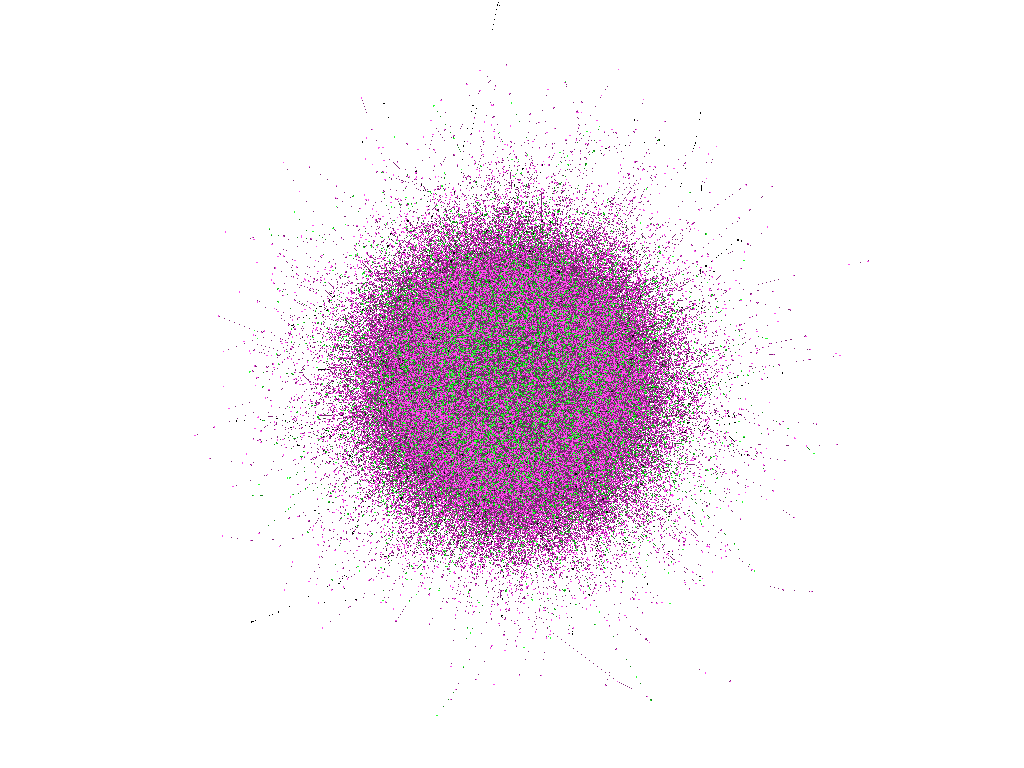,width=0.33 \textwidth}\vspace{-0.00in}\\
{(e)}
\end{tabular}
}
\caption{Facebook dataset interaction topology (a) Original graph (b) Perturbed t=5 (c) t=10 (d) t=15, and (e) t=20. 
We can see that short random walks preserve the community structure of the social graph, while introducing a significant amount of noise. }
\label{fig:facebook-wall}
\end{figure*}

\subsection{Visual depiction of algorithm}

For our evaluation, we consider two real world social network 
topologies (a) \emph{Facebook friendship graph from the New Orleans 
regional network~\cite{vishwanath-wosn09}}: the dataset comprises 63,392 users that have 
816,886 edges amongst them, and (b) \emph{Facebook interaction graph from 
the New Orleans regional network~\cite{vishwanath-wosn09}}: the dataset comprises 43,953 users 
that have 182,384 edges amongst them. Mohaisen et al.~\cite{mohaisen:imc10} found that pre-processing 
social graphs to exclude low degree nodes significantly changes the graph theoretic 
characteristics. Therefore, we did not pre-process the datasets in any way.

Figure~\ref{fig:facebook-links} depicts the original Facebook friendship graph,
and the perturbed graphs generated by our algorithm for varying perturbation parameters, 
using a force directed algorithm for depicting the graph. The color coding of nodes 
in the figure was obtained by running a modularity based community detection 
algorithm on the \emph{original} Facebook friendship graph, which yielded three 
communities. For the perturbed graphs, we used the same color for the vertices as 
in the original graph.  This representation allows us to visually see the perturbation 
in the community structure of the social graph. We can see that for small values of  
the perturbation parameter, the community structure (related to utility) is strongly preserved, even 
though the edges between vertices are randomized. As the perturbation parameter is 
increased, the graph looses its community structure, and eventually begins to 
resemble a random graph.

Figure~\ref{fig:facebook-wall} depicts a similar visualization for the Facebook 
interaction graph. In this setting, we found two communities using a modularity 
based community detection algorithm in the original graph. We can see a similar trend 
in the Facebook interaction graph as well: for small values of perturbation algorithm, 
the community structure is somewhat preserved, even though significant randomization 
has been introduced in the links. In the following sections, we formally quantify the 
utility and privacy properties of our perturbation mechanism.

\section{Utility}
\label{sec:utility}

In this section, we develop formal metrics to characterize the utility of perturbed graphs, 
and then analyze the utility of our perturbation algorithm.  

\subsection{Metrics}

One approach to measure utility would be to consider global graph theoretic metrics, such as 
the second largest eigenvalue modulus of the graph transition matrix P. However, from a user 
perspective, it may be the case that the users' position in the perturbed graph relative to 
malicious users is much worse, even though the global graph properties remain the same. This 
motivates our first definition of the utility of a perturbed graph from the perspective of 
a single user. 

\begin{mydef}
The vertex utility of a perturbed graph G' for a vertex $v$, with respect to the original graph G, and an 
application parameter $l$ is defined as the statistical distance between the probability distributions 
induced by $l$ hop random walks starting from vertex $v$ in graphs G and G'. 
\end{mydef}

\begin{align}
VU(v,G,G',l) = Distance(P_v^l(G), P_v^l(G'))
\end{align}

$P_v^l$ denotes the $v'th$ row of the matrix $P^l$. The parameter $l$ is linked to higher level 
applications that leverage social graphs. For example, 
Sybil defense mechanisms exploit large scale community structure of social networks, where the 
application parameter $l \geq 10$. For other applications such as recommendation systems, it may be more 
important to preserve the local community characteristics, where $l$ could be set to a smaller value. 

Random walks are intimately linked to the structure of communities and graphs, so it is natural to 
consider their use when defining utility of perturbed graphs. In fact, a lot of security applications 
directly exploit the trust relationships in social graphs by performing random walks themselves, such 
as Sybil defenses and anonymous communication.   

There are several ways to define statistical distance between probability distributions~\cite{statistical-distance}. 
In this work, we consider the following three notions. The total variation distance between two probability 
distributions is a measure of the maximum  difference between the probability distributions for any individual 
element.   

\begin{align}
\textrm{Variation Distance}(P,Q) = ||P-Q||_{tvd} = \sup_{i}|p_i-q_i|
\end{align} 

As we will discuss shortly, the total variation distance is closely related to the computation of several 
graph properties such as mixing time and second largest eigenvalue modulus. However, the total variation distance
only considers the maximum difference between probability distributions corresponding to an element, and 
not the differences in probabilities corresponding to other elements of the distribution. This motivates 
the use of Hellinger distance, which is defined as:

\begin{align}
\textrm{Hellinger Distance}(P,Q) = \nonumber \\ \frac{1}{\sqrt{2}} \cdot \sqrt(\sum_{i=1}^{n} (\sqrt(p_i)-\sqrt(q_i))^2)
\end{align} 

The Hellinger distance is related to the Euclidean distance between the square root vectors of P and Q. Finally, 
we also consider the Jenson-Shannon distance measure, which takes an information theoretic approach of averaging 
the Kullback-Leibler divergence between P and Q, and between Q and P (since Kullback-Leibler divergence by itself is not 
symmetric). 

\begin{align}
\textrm{Jenson-Shannon Distance}(P,Q) = \nonumber \\ \frac{1}{2} \cdot \sum_{i=1}^{n} p_i \log(\frac{p_i}{q_i}) + \frac{1}{2} \cdot \sum_{i=1}^{n} q_i \log(\frac{q_i}{p_i}) 
\end{align} 

Using these notions, we can compute the utility of the perturbed graph with respect to an 
individual vertex (vertex utility). Note that a lower value of $VU(v,G,G',l)$ corresponds to 
higher utility (we want distance between probability distributions over original graph and 
perturbed graph to be low). Using the concept of vertex utility, we can define metrics for 
overall utility of a perturbed graph. 

\begin{mydef}
The overall mean vertex utility of a perturbed graph G' with respect to the original graph G, and an 
application parameter $l$ is defined as the mean utility for all vertices in 
G. Similarly the max vertex utility (worst case) of a perturbed graph G' is defined by computing the maximum 
of the utility values over all vertices in G. 
\end{mydef}

\begin{align}
VU_{mean}(G,G',l) = \sum_{v \in V} \frac{Distance(P_v^l(G), P_v^l(G'))}{|V|}
\end{align}

\begin{align}
VU_{max}(G,G',l) = \max_{v \in V} Distance(P_v^l(G), P_v^l(G'))
\end{align}

The notion of max vertex utility is particularly interesting, specially in conjunction with the 
use of total variation distance. This is because of its relationship to global graph metrics such
as mixing times and second largest eigenvalue modulus, which we demonstrate next. Our analysis shows 
the generality of our formal definition for utility. 

\subsection{Metrics Analysis}

Towards this end, we first introduce the notion of mixing time of a Markov process. The mixing time of a 
Markov process is a measure of the minimum number of steps needed to converge to its unique stationary 
distribution.  Formally, the mixing time of a graph G is defined as:

\begin{align}
\tau_G(\epsilon) = \max_v \min(t | P_v^t(G) - \pi| < \epsilon)
\end{align}

The following two theorems illustrate the bound on global properties of the perturbed graph, using the global properties of 
the original graph, and the utility metric. To improve readability of the paper, we defer the proofs of these theorems to the 
Appendix.

\begin{mythm}
Let us denote the max (worst case) vertex utility distance between the perturbed graph G' and the 
original graph G by $VU_{max}(G,G',l)$, computed as $VU_{max}(G,G',l) = \max_{v \in V} VU(v,G',G,l))$. 
Then we have that $\tau_{G'}(\epsilon+VU_{max}(G,G',\tau_G(\epsilon)) \leq \tau_G(\epsilon)$. 
\label{thm:utility-mixing}
\end{mythm}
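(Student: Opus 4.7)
The plan is to apply the triangle inequality for total variation distance. Set $t^\star = \tau_G(\epsilon)$. First, I would specialize the generic $\mathrm{Distance}(\cdot,\cdot)$ appearing in the definition of $VU$ to the total variation distance $\|\cdot\|_{tvd}$, since the mixing-time definition $\tau_G(\epsilon) = \max_v \min\{t : \|P_v^t(G) - \pi\|_{tvd} < \epsilon\}$ is phrased in terms of $\|\cdot\|_{tvd}$, and we need both quantities to live in the same metric in order to combine them.

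Second, I would verify that the random walks on $G$ and $G'$ share the same stationary distribution $\pi_i = \mathsf{deg}(i)/(2m)$. Algorithm~\ref{alg:transform} is designed (via its acceptance probability $\frac{0.5\,\mathsf{deg}(u)-1}{\mathsf{deg}(u)-1}$) to preserve vertex degrees, so the two stationary distributions coincide and the same $\pi$ can be used on both sides. I expect this degree-preservation step to be the principal subtlety: if it only holds in expectation, a correction term $\|\pi - \pi'\|_{tvd}$ would have to be absorbed into the right-hand side, or the theorem reinterpreted as a statement about convergence to $\pi$ rather than to $\pi'$. For the remainder of the sketch I treat $\pi' = \pi$.

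Finally, the proof collapses to a one-line triangle inequality. For any vertex $v$,
\[
\|P_v^{t^\star}(G') - \pi\|_{tvd} \leq \|P_v^{t^\star}(G') - P_v^{t^\star}(G)\|_{tvd} + \|P_v^{t^\star}(G) - \pi\|_{tvd} < VU_{max}(G,G',t^\star) + \epsilon,
\]
where the first summand is bounded by $VU_{max}(G,G',t^\star)$ by the definition of max vertex utility (instantiated with $\|\cdot\|_{tvd}$), and the second is strictly less than $\epsilon$ by the definition of $\tau_G(\epsilon)$. Since this holds for every $v$, the definition of $\tau_{G'}$ immediately gives $\tau_{G'}(\epsilon + VU_{max}(G,G',\tau_G(\epsilon))) \leq t^\star = \tau_G(\epsilon)$, which is the stated bound.
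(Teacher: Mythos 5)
Your proposal is correct and follows essentially the same route as the paper's own proof: a triangle inequality on the total variation distance, $\|P_v^{\tau_G(\epsilon)}(G') - \pi\|_{tvd} \leq \|P_v^{\tau_G(\epsilon)}(G') - P_v^{\tau_G(\epsilon)}(G)\|_{tvd} + \|P_v^{\tau_G(\epsilon)}(G) - \pi\|_{tvd}$, followed by taking the maximum over vertices and invoking the definitions of $VU_{max}$ and $\tau_{G'}$. Your observation that the two walks must share the same stationary distribution $\pi$ (and that degree preservation under Algorithm~\ref{alg:transform} holds only in expectation, so a $\|\pi - \pi'\|_{tvd}$ correction would strictly be needed) is a genuine subtlety that the paper's proof silently elides, so you have if anything been more careful than the original.
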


Theorem~\ref{thm:utility-mixing} relates the mixing time of the perturbed graph using the mixing time of the original 
graph, and the max vertex utility metric, for application parameter $l=\tau_G(\epsilon)$.

\begin{mythm}
Let us denote the second largest eigenvalue modulus (SLEM) of transition matrix $P_G$ of graph $G$ as $\mu_G$. We can bound the 
SLEM of a perturbed graph G' using the mixing time of the original graph, and the worst case vertex utility 
distance between the graphs as follows: 
\begin{align}
1-\frac{\log n + \log(\frac{1}{\epsilon+VU_{max}(G,G',\tau_G(\epsilon)})}{\tau_G(\epsilon)}\leq \mu_{G'} \leq  \nonumber \\ 
\frac{2 \tau_G(\epsilon)}{2 \tau_G(\epsilon) + \log(\frac{1}{2\epsilon + 2VU_{max}(G,G',\tau_G(\epsilon)}) } \nonumber
\end{align}
\label{thm:utility-slem}
\end{mythm}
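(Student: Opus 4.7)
The plan is to derive both inequalities by feeding the classical Jerrum--Sinclair spectral bounds on the mixing time of a reversible ergodic Markov chain into Theorem~\ref{thm:utility-mixing}. Abbreviate $\epsilon' = \epsilon + VU_{max}(G, G', \tau_G(\epsilon))$, so that Theorem~\ref{thm:utility-mixing} reads $\tau_{G'}(\epsilon') \leq \tau_G(\epsilon)$.

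The key black-box tool is the two-sided spectral bound on mixing time: for any reversible ergodic Markov chain on $n$ vertices with SLEM $\mu$,
\[ \frac{\mu}{2(1-\mu)} \log \frac{1}{2\epsilon} \;\leq\; \tau(\epsilon) \;\leq\; \frac{\log n + \log(1/\epsilon)}{1-\mu}. \]
Reversibility of the natural random walk on an undirected graph is immediate because the stationary distribution $\pi_v = \mathsf{deg}(v)/(2m)$ satisfies detailed balance, so both inequalities apply to the chain on $G'$.

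For the upper bound on $\mu_{G'}$, I specialize the left Sinclair inequality to $G'$ at tolerance $\epsilon'$ and chain it with Theorem~\ref{thm:utility-mixing} to obtain $\mu_{G'}\log(1/(2\epsilon'))/(2(1-\mu_{G'})) \leq \tau_{G'}(\epsilon') \leq \tau_G(\epsilon)$. Clearing denominators, collecting the $\mu_{G'}$ terms on one side, and dividing produces the right half of the theorem; monotonicity of $x \mapsto 2x/(2x+c)$ in $x$ ensures that inflating $\tau_{G'}(\epsilon')$ to $\tau_G(\epsilon)$ loosens the bound on $\mu_{G'}$ in the correct direction. For the lower bound on $\mu_{G'}$, I specialize the right Sinclair inequality to $G'$ at tolerance $\epsilon'$, giving $(1-\mu_{G'})\tau_{G'}(\epsilon') \leq \log n + \log(1/\epsilon')$, rearrange to $\mu_{G'} \geq 1 - (\log n + \log(1/\epsilon'))/\tau_{G'}(\epsilon')$, and substitute $\tau_G(\epsilon)$ for $\tau_{G'}(\epsilon')$ using Theorem~\ref{thm:utility-mixing} to reach the left half.

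I expect the main technical obstacle to be the careful bookkeeping of inequality direction under the substitution $\tau_{G'}(\epsilon') \to \tau_G(\epsilon)$, particularly for the lower-bound half: since Theorem~\ref{thm:utility-mixing} only controls $\tau_{G'}(\epsilon')$ from above, the substitution in the denominator of the lower-bound expression must be argued with care so that the resulting inequality on $\mu_{G'}$ still points in the asserted direction. A secondary bookkeeping item is to verify the non-vacuousness conditions $\epsilon' < 1/2$ and $\tau_G(\epsilon) > \log(n/\epsilon')$, so that $\log(1/(2\epsilon'))$ is positive and both sides of the sandwich are meaningful---which is precisely the regime of interest for short random walks and small perturbation.
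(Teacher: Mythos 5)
Your route is the same as the paper's: invoke the two-sided spectral bound $\frac{\mu}{2(1-\mu)}\log\frac{1}{2\epsilon}\leq \tau(\epsilon)\leq \frac{\log n+\log(1/\epsilon)}{1-\mu}$ for the reversible walk on $G'$, invert it to sandwich $\mu_{G'}$ in terms of $\tau_{G'}(\epsilon')$ with $\epsilon'=\epsilon+VU_{max}(G,G',\tau_G(\epsilon))$, and then replace $\tau_{G'}(\epsilon')$ by $\tau_G(\epsilon)$ via Theorem~\ref{thm:utility-mixing}. For the upper bound this is sound: $x\mapsto 2x/(2x+c)$ is increasing for $c=\log(1/(2\epsilon'))>0$, so the substitution of the larger quantity $\tau_G(\epsilon)$ only loosens the bound, exactly as you argue (modulo the non-vacuousness condition $\epsilon'<1/2$, which you correctly flag).

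The obstacle you identify for the lower-bound half is not mere bookkeeping; it is a genuine gap, and it is not resolved by this approach. You derive $\mu_{G'}\geq 1-\frac{\log n+\log(1/\epsilon')}{\tau_{G'}(\epsilon')}$, and the map $x\mapsto 1-c'/x$ is \emph{increasing} in $x$; since Theorem~\ref{thm:utility-mixing} gives $\tau_{G'}(\epsilon')\leq\tau_G(\epsilon)$, substituting $\tau_G(\epsilon)$ into the denominator \emph{increases} the right-hand side, so the claimed inequality $\mu_{G'}\geq 1-\frac{\log n+\log(1/\epsilon')}{\tau_G(\epsilon)}$ is strictly stronger than what you have established and does not follow from an upper bound on $\tau_{G'}(\epsilon')$ alone; one would need a matching \emph{lower} bound on $\tau_{G'}(\epsilon')$ in terms of $\tau_G(\epsilon)$. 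You should be aware that the paper's own appendix proof performs this same substitution silently, so your sketch is faithful to the published argument --- but the step you single out as needing care is precisely where that argument is incomplete.
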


Theorem~\ref{thm:utility-slem} relates the second largest eigenvalue modulus of the perturbed graph, using the mixing time of 
the original graph, and the worst case vertex utility metric for application parameter $l=\tau_G(\epsilon)$.

These theorems show the generality of our utility definitions. Mechanisms that provide good utility (have 
low values of $VU_{max}$), introduce only a small change in the mixing time and SLEM of perturbed graphs. 

\begin{figure*}[htp]
\centering
\mbox{
\hspace{-0.2in}
\hspace{-0.12in}
\begin{tabular}{c}
\psfig{figure=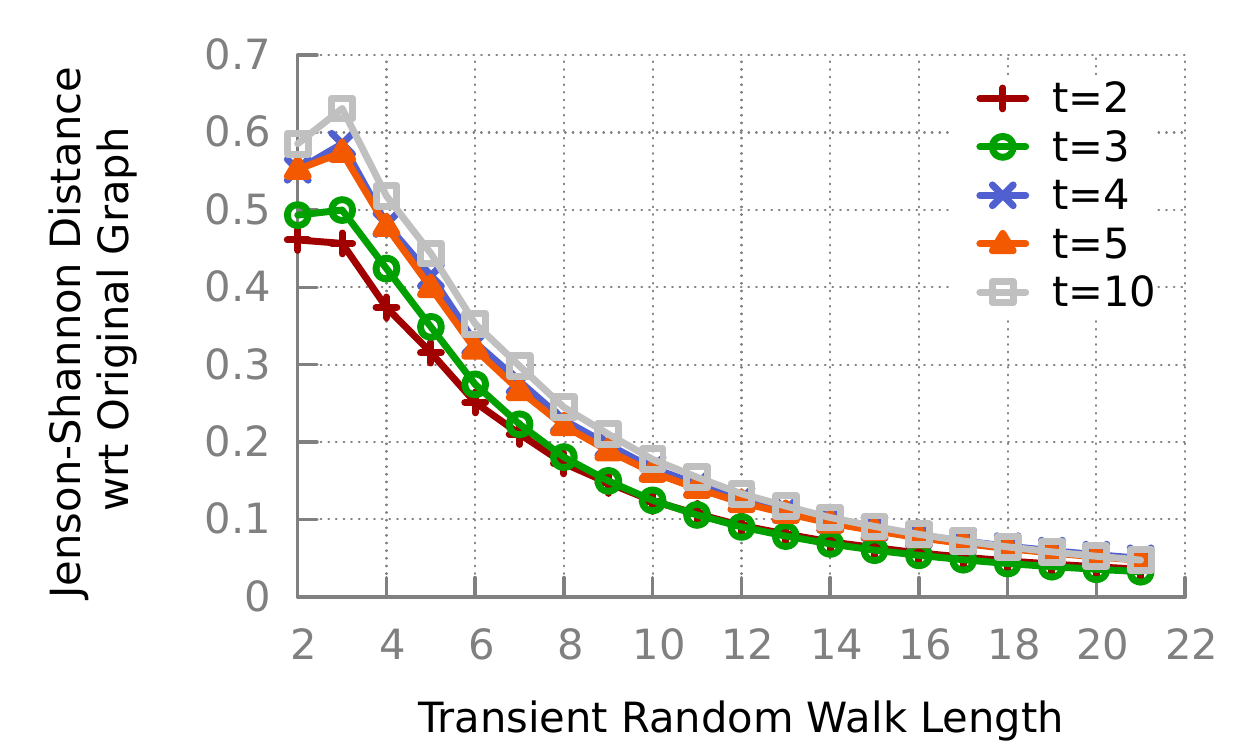,width=0.33 \textwidth}\vspace{-0.00in}\\
{(a)}
\end{tabular}
\begin{tabular}{c}
\psfig{figure=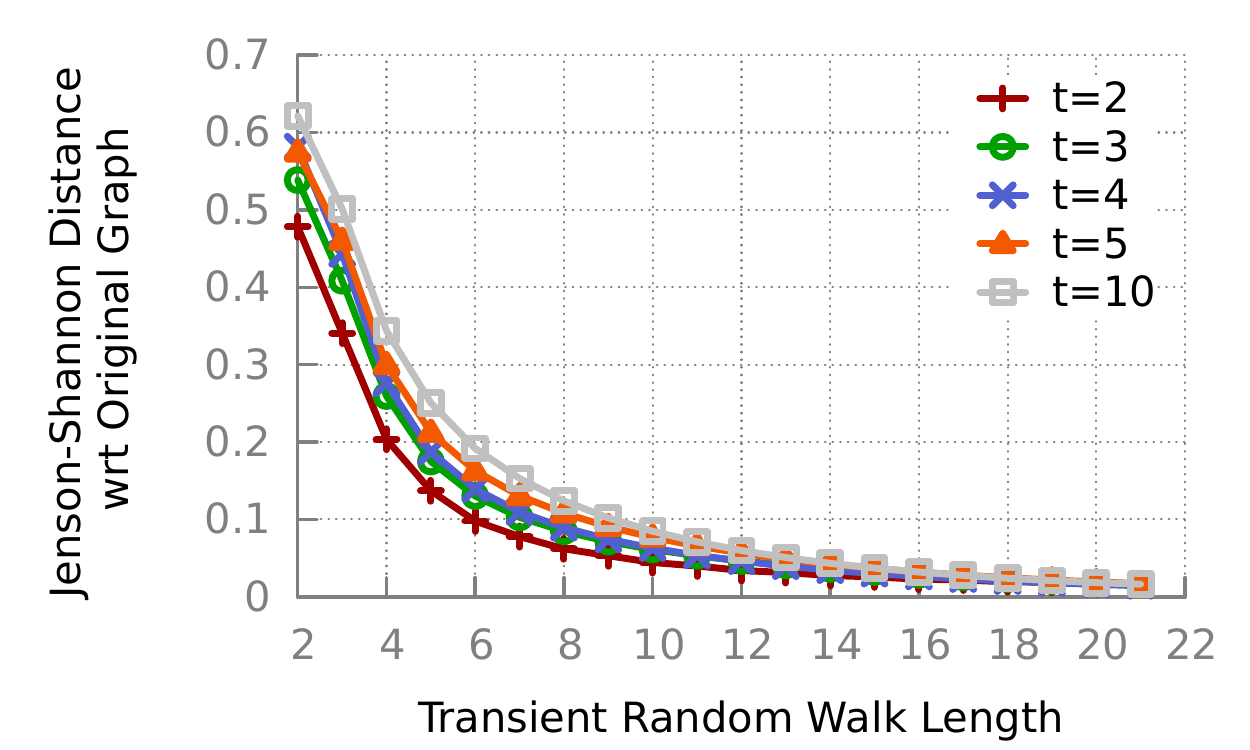,width=0.33 \textwidth}\vspace{-0.00in}\\
{(b)}
\end{tabular}
}
\caption{{\em Jenson-Shannon distance between transient probability distributions for original graph and transformed graph} using 
(a) Facebook interaction graph (b) Facebook wall post graph. We can see that as the original graph is perturbed to a larger degree, 
the distance between original and transformed transient distributions increases, decreasing application utility.}
\label{fig:utility-jenson-shannon}
\vspace{-0.1in}
\end{figure*}

\begin{figure*}[htp]
\centering
\mbox{
\hspace{-0.2in}
\hspace{-0.12in}
\begin{tabular}{c}
\psfig{figure=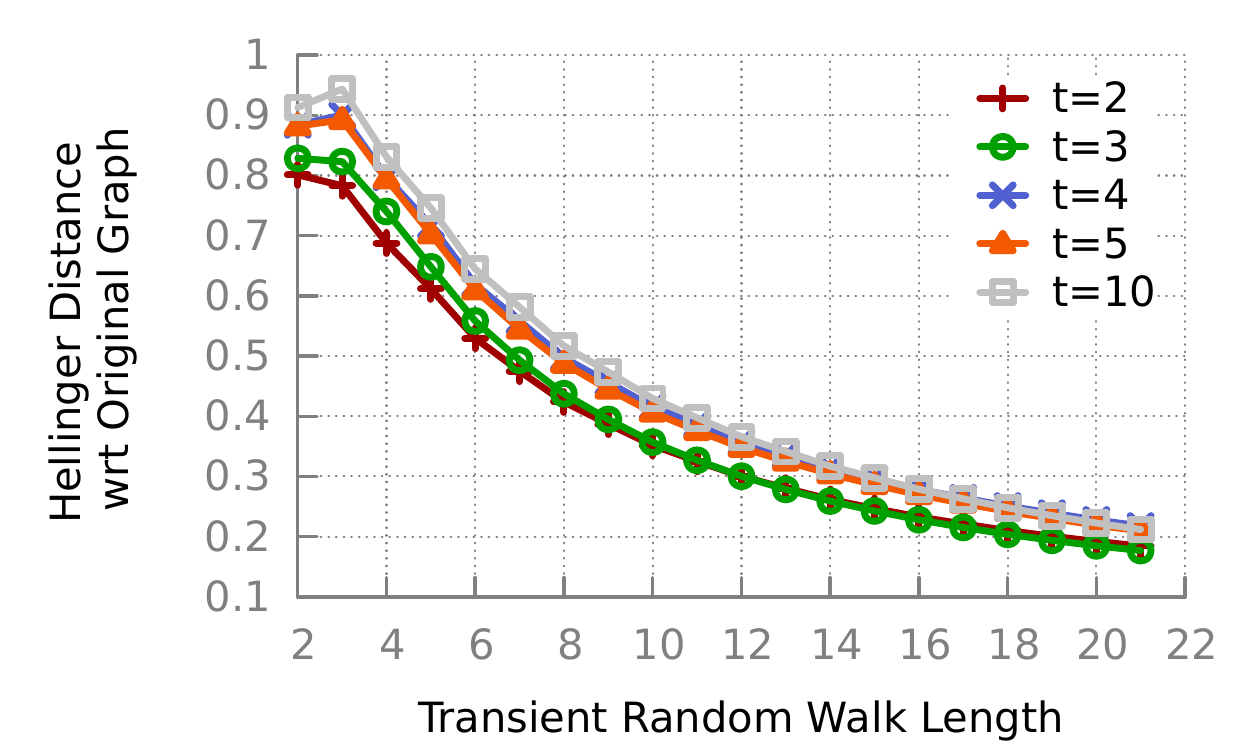,width=0.33 \textwidth}\vspace{-0.00in}\\
{(a)}
\end{tabular}
\begin{tabular}{c}
\psfig{figure=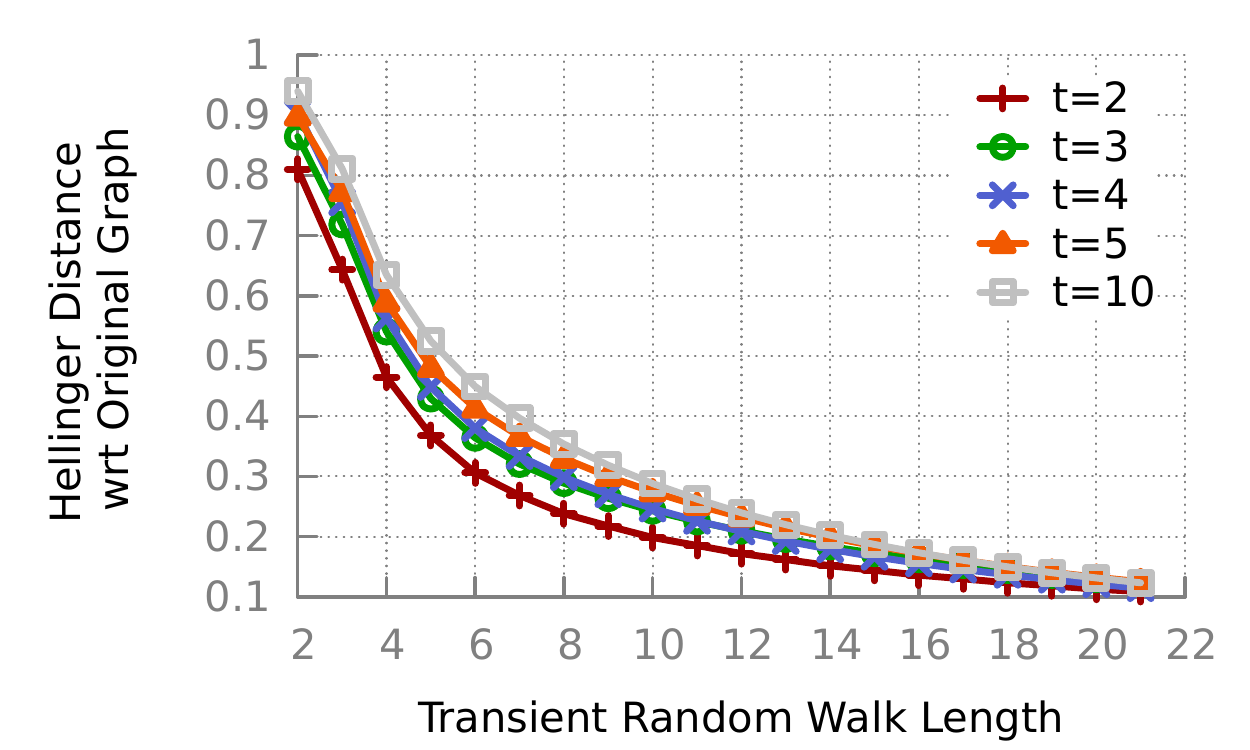,width=0.33 \textwidth}\vspace{-0.00in}\\
{(b)}
\end{tabular}
}
\caption{{\em Hellinger distance between transient probability distributions for original graph and transformed graph} using 
(a) Facebook interaction graph (b) Facebook wall post graph. We can see that even with different notions of distance between probability 
distributions (Hellinger/Jenson-Shannon), the distance between original graph and perturbed graph monotonically increases depending 
on the perturbation degree. }
\label{fig:utility-hellinger}
\vspace{-0.1in}
\end{figure*}

\subsection{Algorithm Analysis}

Our above results show the general relationship between our utility metrics and global graph properties (which hold for any 
perturbation algorithm). Next, we analyze the properties of our proposed perturbation algorithm. 

First, we empirically compute the mean vertex utility of the perturbed graphs ($VU_{mean}$), for varying perturbation 
parameters and varying application parameters. Figure~\ref{fig:utility-jenson-shannon} depicts the mean vertex utility
for the Facebook interaction and friendship graphs using the Jenson-Shannon information theoretic distance metric . We can 
see that as the perturbation parameter increases, the distance metric increases. This is not surprising, since additional 
noise will increase the distance between probability distributions computed from original and perturbed graphs. We can also 
see that as the application parameter $l$ increases, the distance metric decreases. This illustrates that our perturbation 
algorithm is ideally suited for security applications that rely on local or global community structures, as opposed to 
applications that require \emph{exact} information about one or two hop neighborhoods. We can see a similar trend when using 
Hellinger distance to compute the distance between probability distributions, as shown in Figure~\ref{fig:utility-hellinger}. 

\begin{mythm}
The expected degree of each node after the perturbation algorithm is the same as in the original graph: $\forall v \in V, 
E(deg(v)') = deg(v)$, where $deg(v)'$ denotes the degree of vertex $v$ in $G'$.
\end{mythm}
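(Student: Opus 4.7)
The plan is to apply linearity of expectation and split $\deg(v)'$ into two contributions: the edges contributed by the processing of $v$ itself (call this $D_1$) and the edges contributed by the processing of the other vertices whose random walk happens to terminate at $v$ (call this $D_2$). I would show $E[D_1] = E[D_2] = \deg(v)/2$, so that $E[\deg(v)'] = \deg(v)$.

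For $D_1$ the argument is a direct verification of the acceptance probabilities built into the algorithm. When vertex $v$ is processed, one attempt is made for each of $v$'s $\deg(v)$ neighbors; the first attempt is accepted with probability $1$, and each of the remaining $\deg(v)-1$ attempts is accepted with probability $(0.5\,\deg(v)-1)/(\deg(v)-1)$. Summing gives
\[
E[D_1] \;=\; 1 + (\deg(v)-1)\cdot\frac{0.5\,\deg(v)-1}{\deg(v)-1} \;=\; \frac{\deg(v)}{2}.
\]

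For $D_2$ the key observation is that the algorithm launches exactly $\sum_u \deg(u) = 2m$ random walks in total, one for every directed edge of $G$. Regarded as a single population, the starting vertex of a uniformly random walk from this population is distributed according to the stationary distribution $\pi$, since vertex $x$ is the start of exactly $\deg(x)$ walks and $\pi_x = \deg(x)/(2m)$. Because $\pi P^{\,t-1} = \pi$, the terminal vertex of a uniformly random walk from this population is also distributed according to $\pi$, so the expected total number of walks that terminate at $v$ is $2m\pi_v = \deg(v)$. The acceptance decision depends only on the processing vertex and the ordinal of the attempt, hence is independent of the walk's terminal; combined with the fact that the average acceptance probability in any vertex's pass is $1/2$ (from the same calculation as for $D_1$), this yields $\frac{1}{2}\cdot\deg(v) = \deg(v)/2$ expected accepted edges with $v$ as their terminal. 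Restricting to processings at $w\neq v$ (so that the accepted edge genuinely contributes to $D_2$ rather than to a self-loop) uses the retry loop, which replaces self-loop walks with good walks while preserving the overall terminal distribution to the needed order.

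The hard part will be justifying this last step rigorously: the self-loop and duplicate rejection together with the bounded retry parameter $M$ turn the terminal of a single attempt into a conditional distribution, which in principle breaks the clean application of $\pi P^{\,t-1} = \pi$. I would handle this by working in the regime where $M$ is large enough that essentially every attempt eventually succeeds and the perturbed graph is sparse enough that duplicate rejections are rare, so that both effects contribute only lower order terms and the stated equality $E[\deg(v)'] = \deg(v)$ holds as an expectation over the algorithm's randomness.
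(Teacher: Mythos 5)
Your proposal is correct and follows essentially the same route as the paper's proof: the same decomposition of $\deg(v)'$ into the $\deg(v)/2$ expected edges accepted during $v$'s own pass and the $\deg(v)/2$ expected accepted walks terminating at $v$, with the latter resting on the fact that a degree-proportional start distribution is stationary (the paper phrases this via time reversibility, $P_{ij}^t\pi_i = P_{ji}^t\pi_j$, which is the same identity). If anything you are more careful than the paper, which silently ignores the self-loop/duplicate rejection and the finite retry bound $M$ that you correctly flag as the only step needing an approximation argument.
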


\begin{proof}
On an expectation, half the degree of any node $v$ is preserved via outgoing random walks from $v$ in the perturbation 
process. To prove the theorem, we need to show that for each node is the terminal point of deg(v)/2 random walks in 
the perturbation mechanisms (on average). From the time reversibility property of the random walks, we have that 
$P_{ij}^t \pi_i = P_{ji}^t \pi_j$. Thus for any node $i$, the incoming probability of a random walk starting from node 
j is $P_{ji}^t = P_{ij}^t \frac{deg(i)}{deg(j)}$, i.e., it is proportional to the node degree of $i$. Thus the expected 
number of random walks terminating at node $i$ in the perturbation algorithm is given by $\sum_{v \in V} deg(v) P_{vi}^{t-1}$.
This is equivalent to $\sum_{v \in V}  P{iv}^{t-1} deg(i) = deg(i)$. Since half of these walks will be added to the graph 
G' on average, we have that $E(deg(v)') = deg(v)$. 
\end{proof}

\begin{mycor}
The expected value of the largest eigenvalue of the transformed graph is bounded as $\max(d_{avg},\sqrt(d_{max})) \leq E(\lambda_1') \leq d_{max}$
\end{mycor}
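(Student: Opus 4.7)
The plan is to combine three classical spectral inequalities for adjacency matrices, applied pointwise to the random perturbed graph $G'$, with the preceding theorem that guarantees $E(\deg(v)') = \deg(v)$ for every vertex $v$. For any simple undirected graph $H$ with adjacency matrix $A_H$, average degree $\bar d$, and maximum degree $\Delta$, I would invoke: (i) $\lambda_1(A_H) \leq \Delta$ from the Perron--Frobenius row-sum bound; (ii) $\lambda_1(A_H) \geq \bar d$, because the Rayleigh quotient $\mathbf{1}^T A_H \mathbf{1}/\mathbf{1}^T \mathbf{1} = 2|E(H)|/n$ lower-bounds $\lambda_1$; and (iii) $\lambda_1(A_H)^2 \geq \deg(v)$ for every $v$, since $A_H^2$ is positive semi-definite with diagonal entry $(A_H^2)_{vv} = \deg(v)$, so in particular $\lambda_1(A_H) \geq \sqrt{\Delta}$.

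Next I would apply (i)--(iii) pointwise to the random graph $G'$ and take expectations. The $d_{avg}$ lower bound is the cleanest: $\lambda_1' \geq d_{avg}(G') = 2|E(G')|/n$, and the previous theorem gives $\sum_v E(\deg(v)') = \sum_v \deg(v)$, so $E(\lambda_1') \geq d_{avg}$. The upper bound follows from the pointwise estimate $\lambda_1' \leq d_{max}(G')$ combined with the observation that each vertex's expected degree equals its original degree, which anchors the row-sum at $d_{max}$. For the $\sqrt{d_{max}}$ lower bound I would apply (iii) to the vertex $v^\star$ of maximum degree in $G$: $\lambda_1'^{\,2} \geq \deg(v^\star)'$, hence $E(\lambda_1'^{\,2}) \geq E(\deg(v^\star)') = d_{max}$.

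The main obstacle is converting $E(\lambda_1'^{\,2}) \geq d_{max}$ into $E(\lambda_1') \geq \sqrt{d_{max}}$: Jensen applied to the concave square root pushes the inequality the wrong way. I would close this gap by appealing to concentration of $\lambda_1'$ around its mean on sparse social graphs, so that $E(\lambda_1') \approx \sqrt{E(\lambda_1'^{\,2})}$; a similar concentration argument is also what lets the upper bound be read off as $d_{max}$ rather than the larger quantity $E(\max_v \deg(v)')$ that Jensen would naively give. In short, items (i) and (ii) together with the preceding degree-preservation theorem handle the clean parts of the corollary, while the $\sqrt{d_{max}}$ lower bound is the piece that genuinely requires the extra ingredient of spectral concentration.
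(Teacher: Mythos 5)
Your approach is essentially the paper's: the printed proof consists of exactly the deterministic Perron--Frobenius sandwich $\max(d_{avg}',\sqrt{d_{max}'}) \leq \lambda_1' \leq d_{max}'$ applied to $G'$, followed by the two words ``taking expectation'' and an appeal to the degree-preservation theorem. Where you differ is in honesty rather than in method: the two obstacles you flag are real, and the paper simply does not address them. Linearity of expectation only handles the $d_{avg}$ lower bound cleanly, as you note. For the upper bound, $E(\lambda_1') \leq E(\max_v \deg(v)')$ and $E(\max_v \deg(v)') \geq \max_v E(\deg(v)') = d_{max}$, so the naive argument yields an upper bound that is at least $d_{max}$, not at most $d_{max}$; and for the $\sqrt{d_{max}}$ lower bound, Jensen on the concave square root runs the wrong way, exactly as you say. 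Your proposed patch --- concentration of $\lambda_1'$ and of $d_{max}'$ around their means --- is itself only heuristic (no concentration inequality is stated or verified for this particular random-walk perturbation, and sparse graphs are precisely where spectral concentration can fail), so the corollary is not fully proved by your argument either; but the paper offers nothing in its place, so your proposal is strictly more careful than the source. If you wanted a rigorous statement, the honest conclusion of the pointwise-plus-expectation method is $d_{avg} \leq E(\lambda_1')$ together with $E(\sqrt{d_{max}'}) \leq E(\lambda_1') \leq E(d_{max}')$, and the corollary as written should be read as holding only approximately or in expectation under additional concentration assumptions.
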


From the Perron-Frobenius theorem, we have that the largest eigenvalue of the graph is related to the notion of 
average graph degree as follows:

\begin{align}
\max(d_{avg}',\sqrt(d_{max}')) \leq \lambda_1' \leq d_{max}'
\end{align}

Taking expectation on the above equation, and using the previous theorem yields the corollary. 

Next, we show experimental results validating our theorem. Figure~\ref{fig:degree-dist} depicts the node degrees 
of the original graphs, and expected node degrees of the perturbed graphs, corresponding to all nodes in the Facebook 
interaction and friendship graphs. In this figure, the points in a vertical line for different perturbed graphs 
correspond to the same node index. We can see that the degree distributions are nearly identical, validating our 
theoretical results.  

\begin{figure*}[htp]
\centering
\mbox{
\hspace{-0.2in}
\hspace{-0.12in}
\begin{tabular}{c}
\psfig{figure=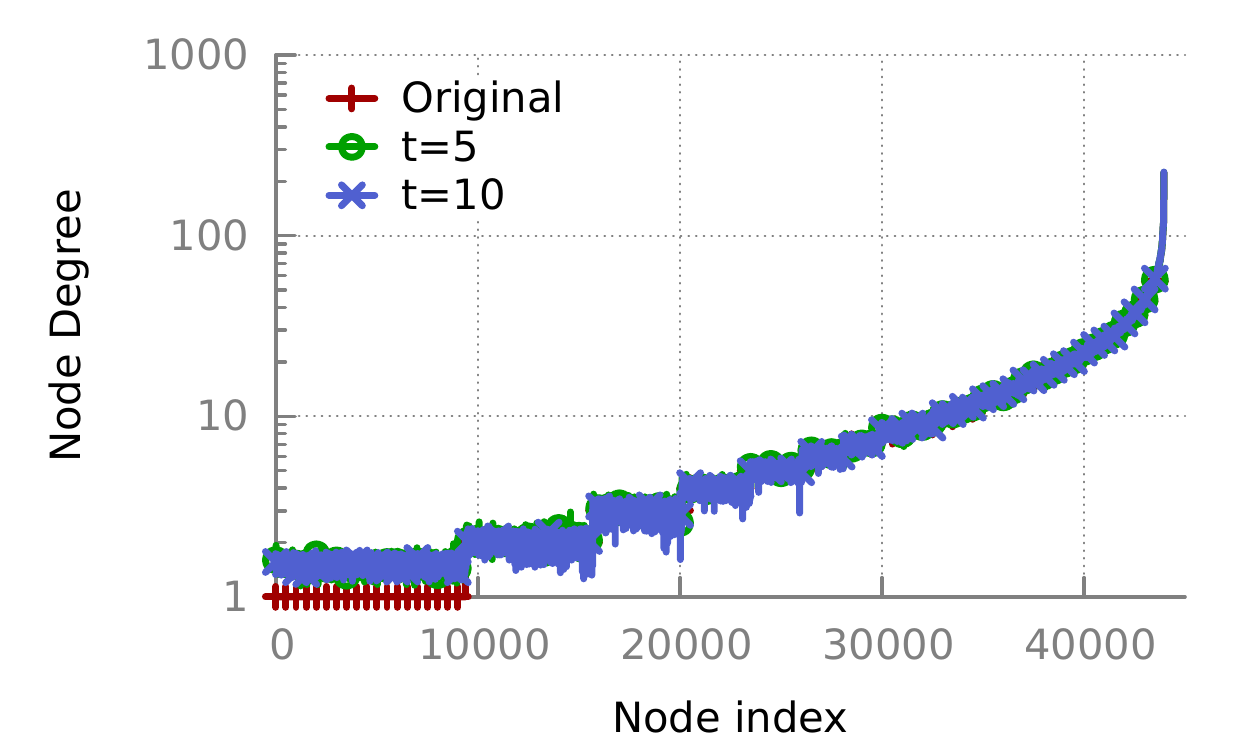,width=0.33 \textwidth}\vspace{-0.00in}\\
{(a)}
\end{tabular}
\begin{tabular}{c}
\psfig{figure=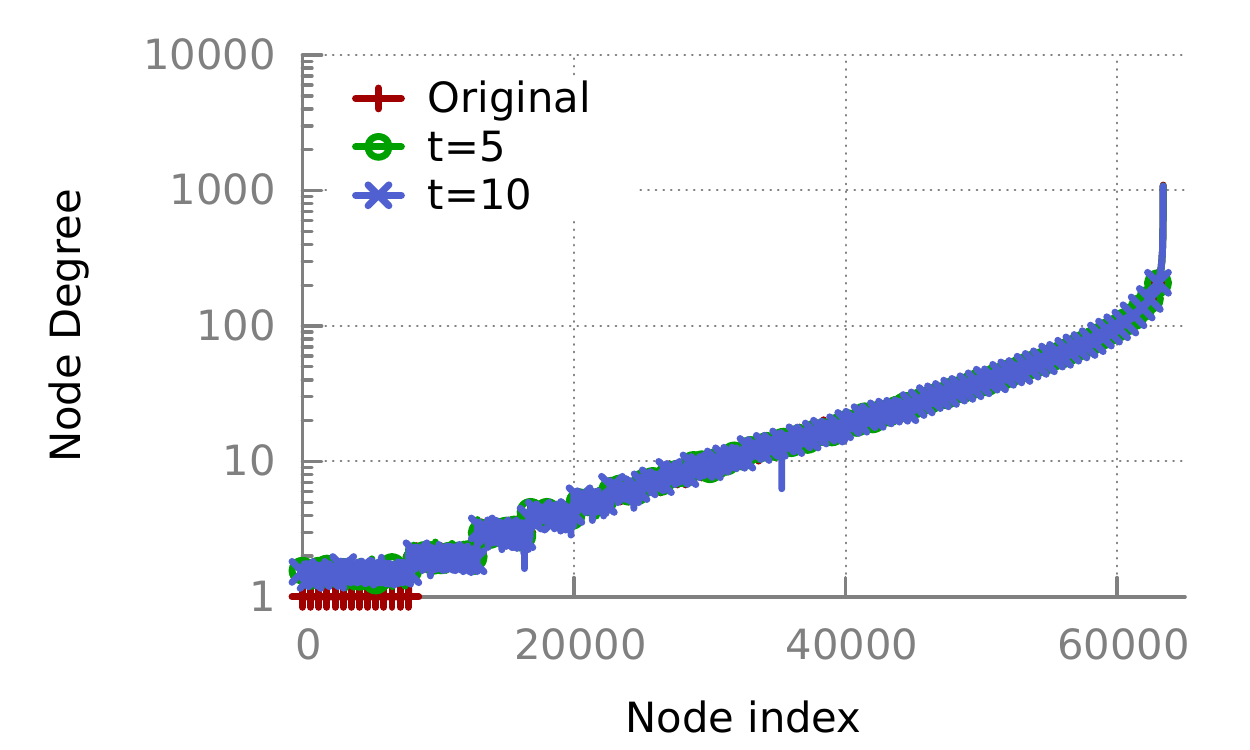,width=0.33 \textwidth}\vspace{-0.00in}\\
{(b)}
\end{tabular}
}
\caption{{\em Degree distribution of nodes} using (a) Facebook interaction graph (b) Facebook wall post graph. We can see that the expected degree 
of each node after the perturbation process remains the same as in the original graph.}
\label{fig:degree-dist}
\vspace{-0.1in}
\end{figure*}

\begin{figure*}[htp]
\centering
\mbox{
\hspace{-0.2in}
\hspace{-0.12in}
\begin{tabular}{c}
\psfig{figure=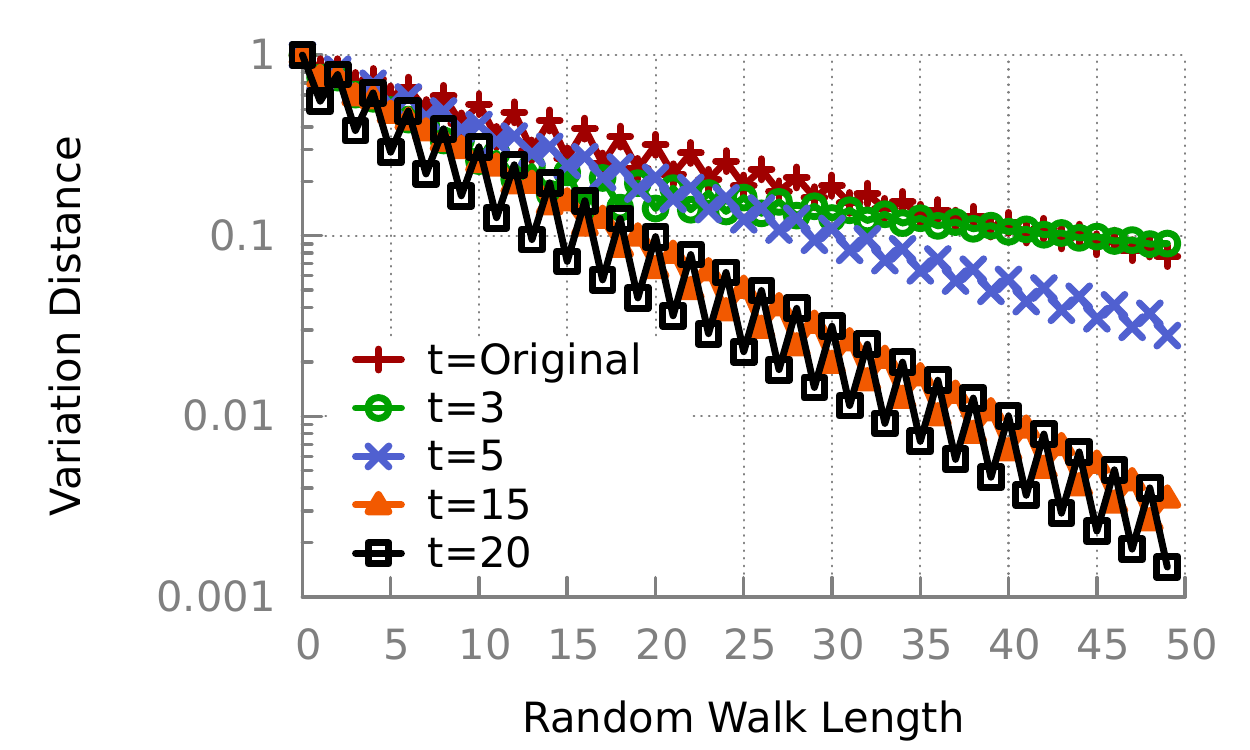,width=0.33 \textwidth}\vspace{-0.00in}\\
{(a)}
\end{tabular}
\begin{tabular}{c}
\psfig{figure=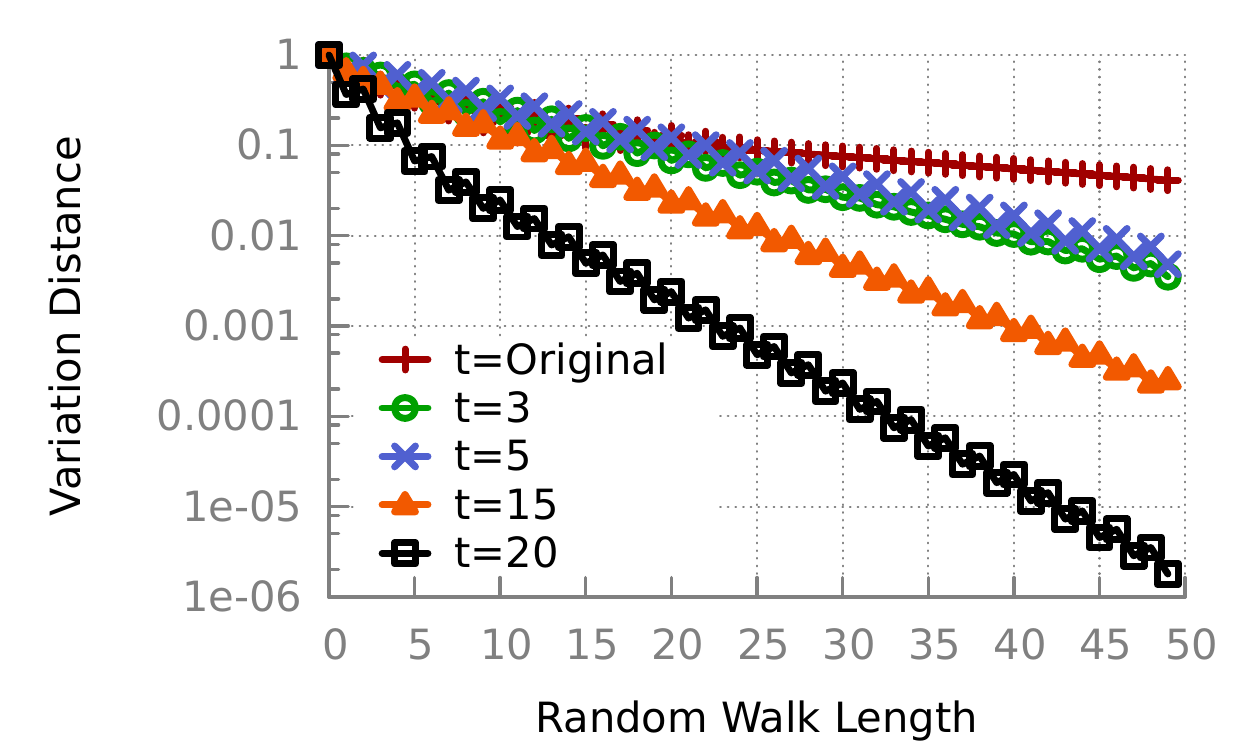,width=0.33 \textwidth}\vspace{-0.00in}\\
{(b)}
\end{tabular}
}
\caption{{\em Total variation distance as a function of random walk length} using (a) Facebook interaction graph (b) Facebook wall post graph. We can see that increasing the 
perturbation parameter of our algorithm reduces the mixing time of the graph.}
\label{fig:mixing}
\vspace{-0.1in}
\end{figure*}

\begin{mythm}
Using our perturbation algorithm, the mixing time of the perturbed graphs is 
related to the mixing time of the original graph as 
follows: $\frac{\tau_G(\epsilon)}{t} \leq E(\tau_{G'}(\epsilon)) \leq \tau_G(\epsilon)$.
\label{thm:mixing}
\end{mythm}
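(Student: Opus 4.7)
The plan is to first establish the key structural identity $E(P_{G'}) = P_G^{t}$, which drives both bounds. A $G'$-edge out of $u$ is produced by sampling a uniform $G$-neighbor $v$ of $u$ and then running a $(t-1)$-hop random walk from $v$ in $G$; marginalizing over these independent choices gives
\begin{equation}
E(P_{G'}(u,z)) \;=\; \frac{1}{\mathsf{deg}(u)} \sum_{v \,:\, (u,v) \in E} P_G^{\,t-1}(v,z) \;=\; P_G^{\,t}(u,z).
\end{equation}
Combined with the preceding theorem, which shows that the expected degree distribution (and hence the stationary distribution $\pi$) is preserved, this says that one hop on $G'$ behaves in expectation exactly like a $t$-hop walk on $G$.

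For the lower bound, I would couple each hop of the $G'$-walk to a $t$-hop block of a $G$-walk. After $k$ steps on $G'$ starting at $v$, the distribution averaged over the perturbation is $P_G^{\,kt}(v,\cdot)$, so it lies no closer to $\pi$ than the distribution of a $kt$-step walk on $G$. Consequently, if $kt < \tau_G(\epsilon)$ then in expectation the $G'$-walk is still more than $\epsilon$ away from $\pi$, which means $\tau_{G'}(\epsilon) > k$. Taking $k = \lceil \tau_G(\epsilon)/t \rceil - 1$ and then passing to expectations yields $E(\tau_{G'}(\epsilon)) \geq \tau_G(\epsilon)/t$.

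For the upper bound, the same coupling runs in the other direction: after $\tau_G(\epsilon)$ hops on $G'$ the coupled $G$-walk has taken $t \cdot \tau_G(\epsilon) \geq \tau_G(\epsilon)$ steps and is therefore already within $\epsilon$ of $\pi$. Since $\pi$ is the common stationary distribution of $G$ and $G'$, this forces $\tau_{G'}(\epsilon) \leq \tau_G(\epsilon)$ deterministically, and hence also in expectation.

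The main obstacle is that $E(P_{G'}^k)$ is \emph{not} literally $(P_G^{\,t})^k$: once $G'$ is sampled its edges are reused across different hops of the walk, introducing correlations absent from a chain of $k$ independent $t$-hop walks on $G$. The cleanest way around this, which I would adopt, is to build the coupling at the level of walks rather than matrices: whenever the $G'$-walk visits a vertex $u$ for the first time, the $G'$-edge it traverses was originally drawn as the terminus of an independent $t$-hop walk on $G$ from $u$, so the next $G'$-hop and the next $t$-step block on $G$ can be coupled to coincide. Since $\tau_G(\epsilon)$ is typically polylogarithmic in $n$ while the perturbation assigns each vertex an independent fresh set of $G'$-neighbors, revisits to already-queried vertices occur with negligible probability over the walk's duration and contribute only a lower-order correction to the total-variation comparison, leaving the stated bounds on $E(\tau_{G'}(\epsilon))$ intact.
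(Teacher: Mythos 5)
Your lower bound follows essentially the same route as the paper: the paper also treats each edge of $G'$ as a sample from the $t$-hop walk distribution on $G$ and argues by contradiction that if $\tau_{G'}(\epsilon) = k < \tau_G(\epsilon)/t$, then a $kt$-step walk on $G$ would already have mixed, a contradiction. Your explicit identity $E(P_{G'}(u,\cdot)) = P_G^{\,t}(u,\cdot)$ and your acknowledgment that $E(P_{G'}^k) \neq (P_G^{\,t})^k$ because of edge reuse are both more careful than the paper, which passes over both points in silence.

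The upper bound is where you diverge from the paper, and where there is a genuine gap. Your coupling establishes that the distribution of the $G'$-walk \emph{averaged over the randomness of the perturbation} is (approximately) $P_G^{\,kt}(v,\cdot)$, hence within $\epsilon$ of $\pi$ once $kt \geq \tau_G(\epsilon)$. But total variation distance is convex in the distribution, so Jensen's inequality gives $\| E[P_v^k(G')] - \pi \| \leq E[\, \| P_v^k(G') - \pi \| \,]$ --- the inequality points the wrong way for you. Closeness of the averaged kernel to $\pi$ does not control the distance for a typical realized $G'$, and the claim that $\tau_{G'}(\epsilon) \leq \tau_G(\epsilon)$ holds ``deterministically'' cannot be right, since $G'$ is random and unlucky realizations can mix more slowly. (The same asymmetry is exactly why averaging \emph{does} work for the lower bound: there Jensen gives $E[\,\| P_v^k(G') - \pi \|\,] \geq \| P_G^{\,kt}(v,\cdot) - \pi \| \geq \epsilon$.) The paper sidesteps this by proving the upper bound via graph conductance instead: it argues that the expected number of $G'$-edges crossing the bottleneck cut $S$ of $G$ is at least the number $g$ crossing it in $G$, because the $t$-hop escape probability from $S$ dominates the one-hop conductance of the bottleneck; the expected conductance therefore does not decrease, so the expected mixing time does not increase. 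That argument is itself informal, but it works with a monotone functional of the \emph{realized} graph rather than with the averaged transition kernel. To repair your route you would need a concentration statement showing that the realized $P_v^k(G')$ is close to its mean; the observation that revisits to already-queried vertices are rare does not by itself provide that.
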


Theorem~\ref{thm:mixing} bounds the mixing time of the perturbed graph using the mixing time of the original graph and 
the perturbation parameter $t$. We defer the proof to the Appendix. 
Finally, we compute the mixing time of the original and perturbed graphs using simulations. Figure~\ref{fig:mixing} depicts 
the total variation distance between random walks of length $x$ and the stationary distribution, for the original and 
perturbed graphs.~\footnote{Variation distance has a slight oscillating behavior at odd and even steps of the random walk; this phenomenon 
is also observed in Figure~\ref{fig:privacy-median}.} We can see that as the perturbation parameter increases, the 
total variation distance (and the mixing time) decreases. Moreover, for small values of the perturbation parameter, the difference from 
the original topology is small. As an aside, it is interesting to note that the variation distance for the Facebook 
friendship graph is orders of magnitude smaller that the Facebook interaction graph. This is because the Facebook interaction 
graphs are a lot sparser, resulting in slow mixing. 

\section{Privacy}
\label{sec:privacy}

We now address the question of understanding link privacy of our 
perturbation algorithm. We use several notions for quantifying 
link privacy, which fall into two categories (a) quantifying 
exact probabilities of de-anonymizing a link given specific 
adversarial priors, and (b) quantifying risk of de-anonymizing a link 
without making specific assumptions about adversarial priors. 
We also characterize the relationship between utility and privacy 
of a perturbed graph. 

\subsection{Bayesian formulation for link privacy}

\begin{mydef}
We define the privacy of a link L (or a subgraph) in the original graph, as the 
probability of existence of the link (or a subgraph), as computed by the adversary, 
under an optimal attack strategy using its prior information H: $P(L|G',H)$
\end{mydef}

Note that low values of link probability $P(L|G',H)$ correspond to high privacy. We 
cast the problem of computing the link probability as a Bayesian inference problem. 
Using Bayes theorem, we have that:

\begin{align}
P (L | G', H) = \frac{P(G'|L,H) \cdot P(L|H)}{P(G'|H)}
\end{align}

In the above expression, $P(L|H)$ is the prior probability of the link. In Bayesian 
inference, $P(G'|H)$ is a normalization constant that is typically difficult to compute, 
but this is not an impediment for the analysis since sampling techniques can be used 
(as long as the numerator of the Bayesian formulation is computable upto a constant factor)~\cite{metropolis,MetropolisHastings}.  
Our key theoretical challenge is to compute $P(G'|L,H)$.

To compute $P(G'|L,H)$, the adversary has to consider all possible graphs $Gp$, 
which have the link $L$, and are consistent with background information $H$. 
Thus, we have that: 

\begin{align}
\label{eqn:gp}
P(G'|L,H) = \sum_{Gp} P(G'| Gp) \cdot P(Gp | L,H)
\end{align}

The adversary can compute $P(G' | Gp)$ using the knowledge of the perturbation 
algorithm; we assume that the adversary knows the full details of our perturbation 
algorithm, including the perturbation parameter $t$. Observe that given $Gp$, 
edges in $G'$ can be modeled as samples from the probability distribution of 
$t$ hop random walks from vertices in $Gp$. Thus we can compute $P(G'|Gp)$ as follows:

We can compute $P(G' | Gp)$ 
using the $t$ hop transition probabilities of vertices in $Gp$. 

\begin{align}
P(G' | Gp) = {(\frac{1}{2})}^{2 \cdot m} \cdot \binom{2 \cdot m}{m'} \cdot \nonumber \\ \prod_{i-j \in E(G')} \frac{P_{ij}^t(Gp) + P_{ji}^t(Gp)}{2}
\end{align}

In general, the number of possible graphs $Gp$ that have L as a link and are 
consistent with the adversary's background information can be very large, 
and the computation in Equation~\ref{eqn:gp} then becomes intractable. %

\begin{figure}[!tp]
\centering
\includegraphics[width=0.4\textwidth]{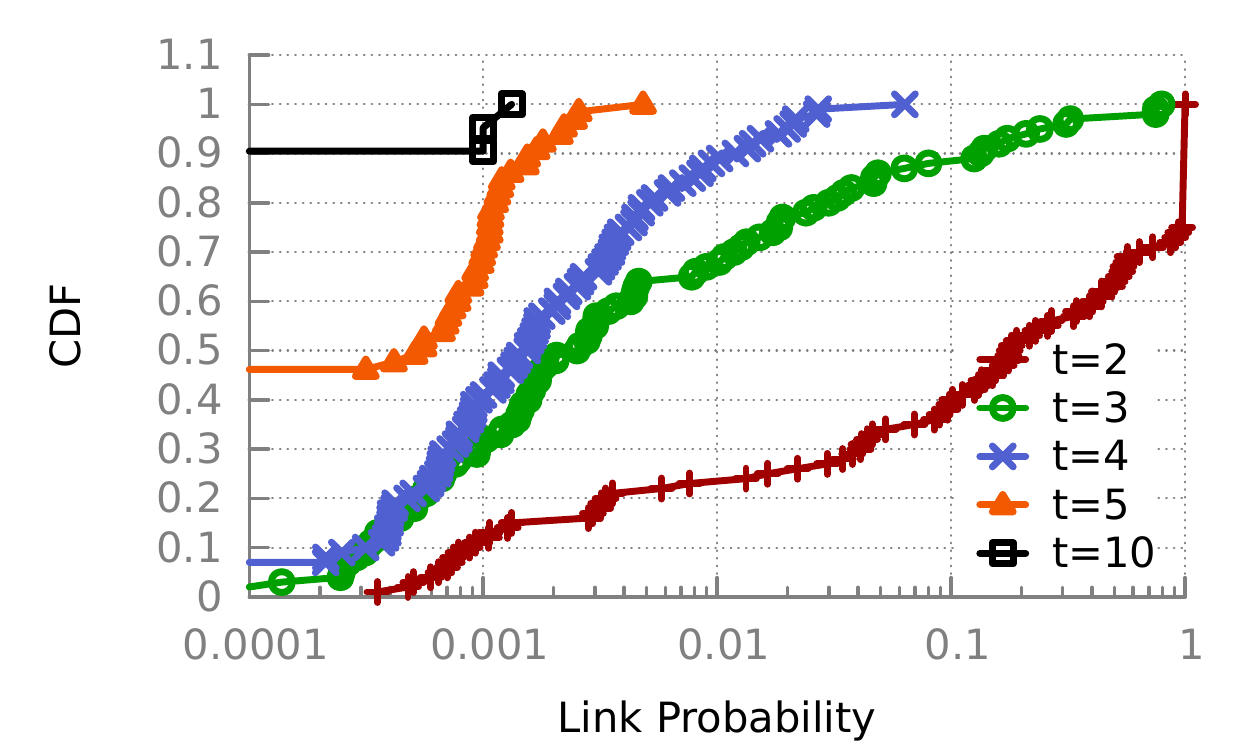}
\vspace{-0.1in}
\caption{{\em Cumulative distribution of link probability $P(L | G',H)$} (x-axis is logscale) 
under worst case prior $H = G-L$ using a synthetic scale free topology. Small probabilities 
offer higher privacy protection. }
\label{fig:privacy-worstprior}
\vspace{-0.1in}
\end{figure}

For evaluation, we consider a special case of this definition: the adversary's prior 
is the entire original graph without the link $L$ (which is the link for which we want 
to quantify privacy). Observe that this is a very powerful adversarial prior; we use 
this prior to shed light on the \emph{worst-case} link privacy using our perturbation algorithm. 
Under this prior, we have that:

\begin{align}
P (L | G', G - L)  & =   \frac{P(G'|L,G - L) \cdot P(L|G - L)}{P(G'|G - L)} \nonumber \\
& = \frac{P(G' | G) \cdot P(L | G - L)}{P(G' | G - L)} \nonumber \\
& = \frac{P(G' | G) \cdot P( L | G - L)} { \sum_l P(G' | G - L + l) }
\end{align}

Using $G-L$ as the adversarial prior constraints the set of possible $Gp$ to a polynomial number. However, 
even in this setting, we found that the above computation is computationally expensive ($> O(n^3)$) 
using our real world social networks. Thus to get an understanding of link privacy in this setting, 
we generated a 500 node synthetic scale-free topology using the preferential attachment methodology 
of Nagaraja~\cite{nagaraja:pet07}. 
The parameters of the scale free topology was set using the average degree in the Facebook 
interaction graph. Figure~\ref{fig:privacy-worstprior} depicts the cumulative distribution for 
link probability (probability of de-anonymizing a link, $P(L|G',H)$) in this setting 
(worst case prior) for the synthetic scale-free topology. We can see that there is significant 
variance in the privacy protection 
received by links in the topology: for example, using perturbation parameter $t=2$, $40\%$ 
of the links have a link probability less than $0.1$ (small is better), while $30\%$ of the 
links have have a probability of 1 (can be fully de-anonymized). Note that this is a worst case 
analysis, since we assume that an attacker knows the entire original graph except the link in 
question. Furthermore, even in this setting, as the perturbation parameter $t$ increases, the privacy 
protection received by links substantially improves: for example, using $t=5$, all of the links have a link  
probability less than $0.01$, while $70\%$ of the links have a link probability of less 
than $0.001$. Thus we can see that even in this worst case analysis, our perturbation mechanism
offers good privacy protection.  

Comparison with Hay et al~\cite{hay:umass07}: previous work proposed a perturbation approach 
where $k$ real edges are deleted and $k$ fake edges are introduced at random. Even considering 
$k=m/2$ (which would substantially hurt utility, for example, by introducing large number of edges 
between Sybils and honest users), 50\% of the edges in the perturbed graph 
are \emph{real} edges between users; for these edges, $P(L|G')=0.5$. Here we can see the benefit 
of our perturbation mechanism: by sampling fake edges based on the \emph{structure} of the original graph, 
we are able to significantly improve link privacy without hurting utility. 

\subsection{Relationship between privacy and utility}

Intuitively, there is a relationship between link privacy and the utility of the 
perturbed graphs. Next, we formally quantify this relationship.  %

\begin{mythm}
Let the maximum vertex utility of the graph (over all vertices) corresponding to an application parameter 
$l$ be $VU_{max}(G,G',l)$. Then for any two pair of vertices $A$ and $B$, we have that 
$P(L_{AB} | G') \geq f(\delta)$, where $f(\delta)$ denotes the prior probability of two vertices being 
friends given that they are both contained in a $\delta$ hop neighborhood, and $\delta$ is computed 
as $\delta = \min {k: P_{AB}^k(G') - VU_{max}(G,G',k) > 0}$.
\label{thm:utility-privacy}
\end{mythm}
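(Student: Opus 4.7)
My plan is to exploit the fact that vertex utility (under the total variation notion) directly controls how much each coordinate of a $k$-hop transition vector can change under perturbation, which lets the adversary lower-bound graph-theoretic distance in $G$ from observations in $G'$.

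First, I would unpack the hypothesis on $\delta$. By the definition of $VU_{max}$ with $\mathrm{Distance}=\|\cdot\|_{tvd}$, for every vertex $A$ and every $k$,
\begin{equation}
\bigl|\,P_{AB}^{k}(G)-P_{AB}^{k}(G')\,\bigr|\;\le\;\sup_{i}\bigl|P_{Ai}^{k}(G)-P_{Ai}^{k}(G')\bigr|\;\le\;VU_{max}(G,G',k).
\end{equation}
Rearranging gives $P_{AB}^{k}(G)\ge P_{AB}^{k}(G')-VU_{max}(G,G',k)$. Taking $k=\delta$, the defining inequality $P_{AB}^{\delta}(G')-VU_{max}(G,G',\delta)>0$ forces $P_{AB}^{\delta}(G)>0$. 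A nonzero $\delta$-step random walk probability from $A$ to $B$ in $G$ means that there is a walk of length $\delta$ between them, so $B$ lies in the $\delta$-hop neighborhood of $A$ in the original graph $G$.

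Next, I would translate this structural inference into a Bayesian statement about $P(L_{AB}\mid G')$. The adversary, knowing the perturbation algorithm and observing $G'$, can compute $P_{AB}^{\delta}(G')$ and $VU_{max}(G,G',\delta)$ (the latter depending only on public aggregate properties of the perturbation mechanism). Hence the adversary can certify that $A$ and $B$ sit within distance $\delta$ in $G$. Conditioning on this event, the posterior probability that the edge $L_{AB}$ is actually present is at least the prior probability $f(\delta)$ of two vertices being neighbors given that they lie within the same $\delta$-ball; this is exactly the quantity $f(\delta)$ in the statement. Formally, since $P(L_{AB}\mid G')\ge P(L_{AB}\mid A,B\text{ within } \delta \text{ hops in } G)=f(\delta)$ (any additional information from $G'$ can only refine the estimate upward, using monotonicity of the optimal inference), we obtain the claimed bound.

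The main obstacle, and the step I would spend the most care on, is justifying the last inequality rigorously: one needs to argue that $P(L_{AB}\mid G')$ is no smaller than $f(\delta)$, i.e., that the event ``$A,B$ are within $\delta$ hops in $G$'' — which the adversary can certify from $G'$ — gives at least as much information as the generic $\delta$-ball prior. This is essentially a tower-property / data-processing argument: $f(\delta)$ is the posterior under strictly less information than $G'$ affords, so the adversary's optimal inference can only do better. The other routine checks are that the distance definition used for $VU$ is TVD (which is the relevant choice invoked in the theorem via the $\sup_i$ form), and that the min in the definition of $\delta$ is well-defined, which follows because for sufficiently large $k$ the walk distribution in $G'$ approaches $\pi$, whose $B$-coordinate is bounded below by $\deg(B)/2m>VU_{max}$ once $k$ exceeds the mixing threshold bounded in Theorem~\ref{thm:utility-mixing}.
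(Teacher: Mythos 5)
Your proposal follows essentially the same route as the paper's proof: bound $|P_{AB}^{k}(G)-P_{AB}^{k}(G')|$ by $VU_{max}(G,G',k)$ via the total variation definition of vertex utility, conclude from the definition of $\delta$ that $P_{AB}^{\delta}(G)>0$ so that $A$ and $B$ must lie within a $\delta$-hop neighborhood of each other in $G$, and then invoke the prior $f(\delta)$ as the resulting lower bound on the link probability. The paper asserts the final step (passing from the certified $\delta$-hop containment to $P(L_{AB}\mid G')\ge f(\delta)$) at the same informal level without your monotonicity/data-processing gloss, and your additional check that the minimum defining $\delta$ is attained goes slightly beyond what the paper verifies, but the argument is the same.
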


Utility measures the change in graph structure between original 
and perturbed graphs. If this change is small (high utility), then an adversary 
can infer information about the original graph given the perturbed graph. 
For a given level of utility, the above theorem demonstrates a lower bound on 
link privacy. 

\begin{figure*}[htp]
\centering
\mbox{
\hspace{-0.2in}
\hspace{-0.12in}
\begin{tabular}{c}
\psfig{figure=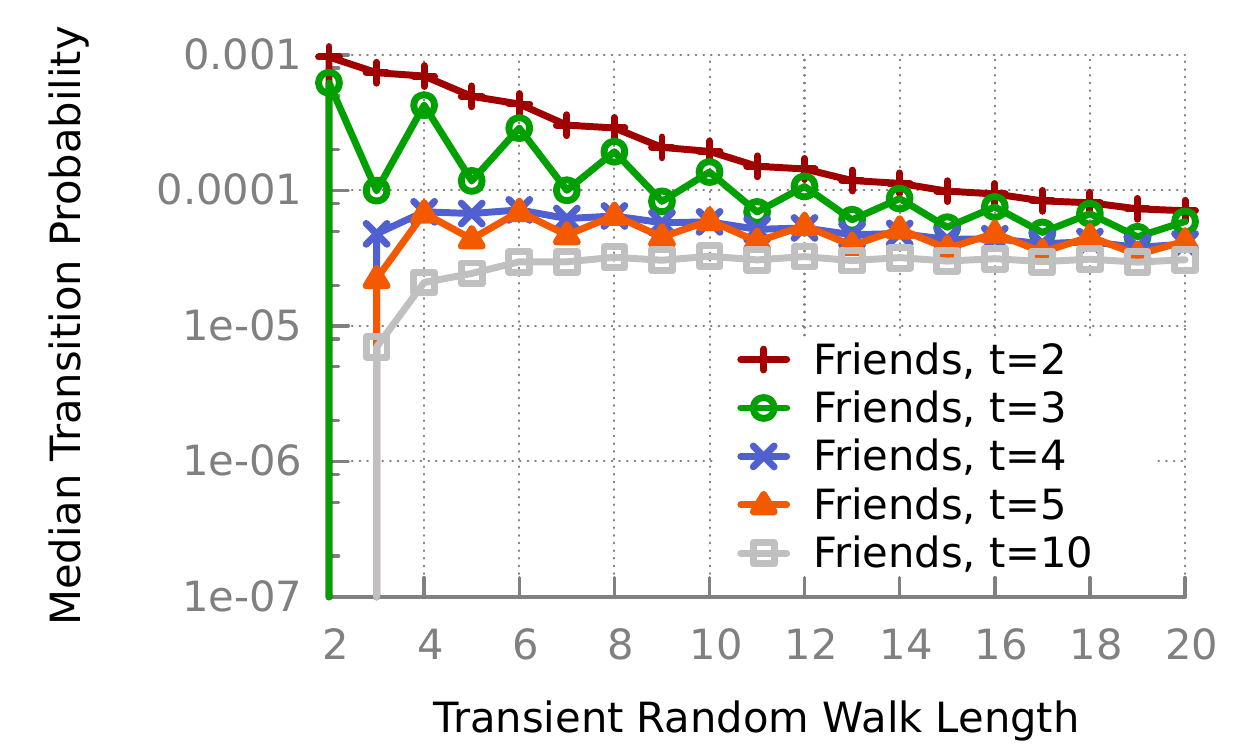,width=0.33 \textwidth}\vspace{-0.00in}\\
{(a)}
\end{tabular}
\begin{tabular}{c}
\psfig{figure=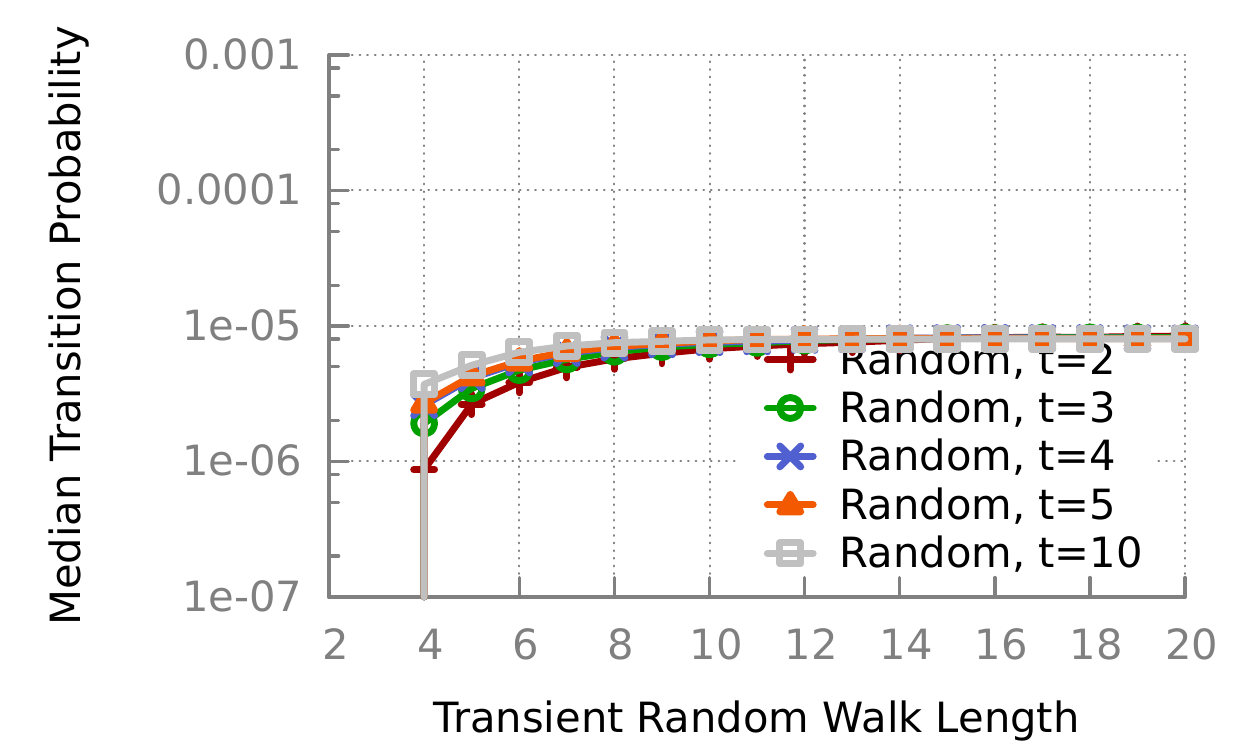,width=0.33 \textwidth}\vspace{-0.00in}\\
{(b)}
\end{tabular}
}
\vspace{-0.1in}
\caption{{\em Median transition probability between two vertices in transformed graph} when (a) two vertices were neighbors (friends) in the original graph and (b) two vertices were not neighbors in the original graph, for the 
Facebook interaction graph. }
\label{fig:privacy-median}
\vspace{-0.1in}
\end{figure*}

\begin{figure*}[htp]
\centering
\mbox{
\hspace{-0.2in}
\hspace{-0.12in}
\begin{tabular}{c}
\psfig{figure=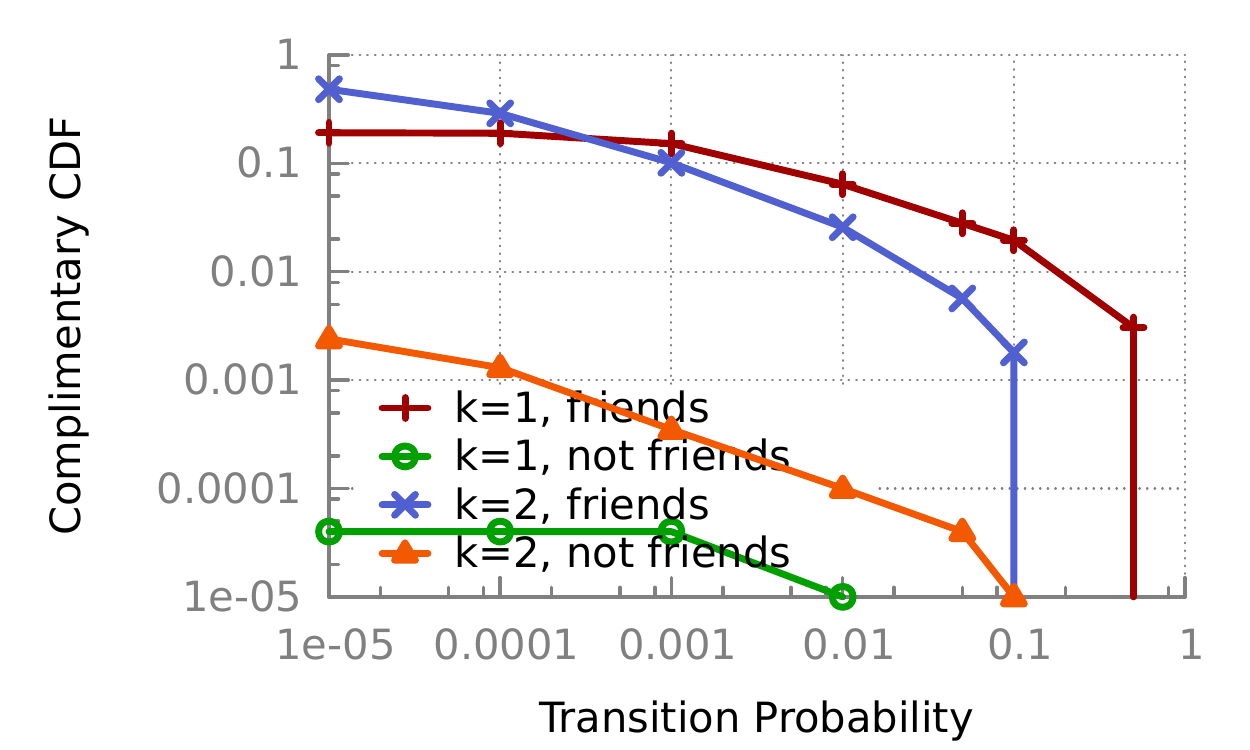,width=0.3 \textwidth}\vspace{-0.00in}\\
{(a)}
\end{tabular}
\begin{tabular}{c}
\psfig{figure=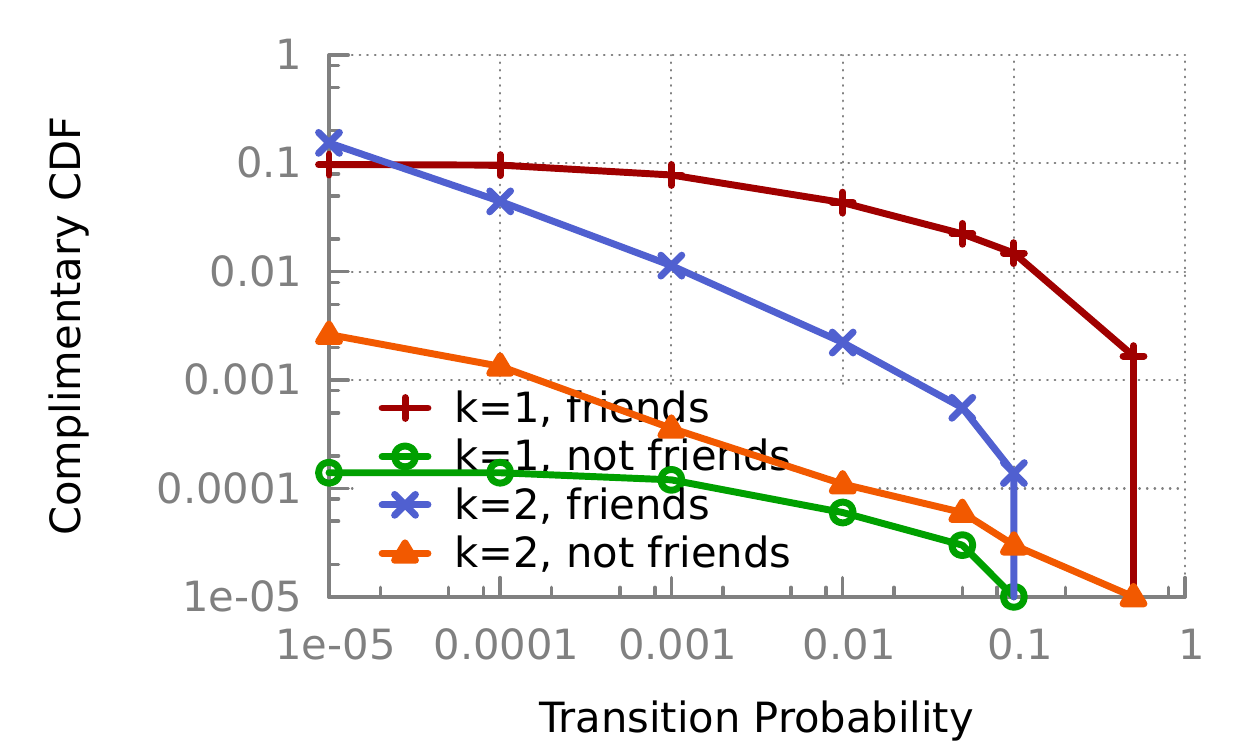,width=0.3 \textwidth}\vspace{-0.00in}\\
{(b)}
\end{tabular}
\begin{tabular}{c}
\psfig{figure=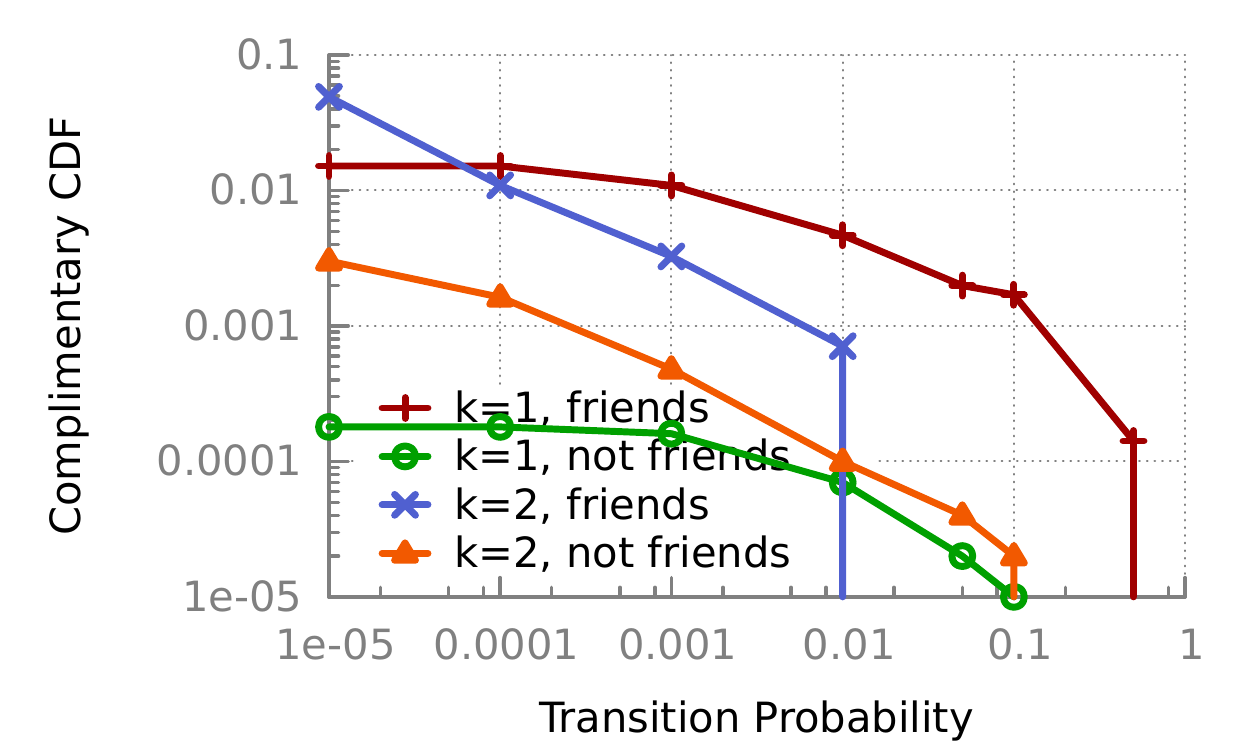,width=0.3 \textwidth}\vspace{-0.00in}\\
{(b)}
\end{tabular}
}
\vspace{-0.1in}
\caption{{\em Complimentary cumulative distribution for $P_{AB}^k(G')$} using (a) perturbation parameter $t=2$, and (b) perturbation parameter $t=5$, and (c) perturbation 
parameter $t=10$ for the Facebook interaction graph. }
\label{fig:privacy-fulldist}
\vspace{-0.1in}
\end{figure*}

We defer the proof of the above theorem to the Appendix. The above theorem is a general theorem 
that holds for all perturbed graphs. To shed some intuition, we specifically analyze the lower 
bounds on privacy for our perturbation algorithm (where parameter $t$ governs utility), 
using the transition probability between $A$ and $B$ in the perturbed graph $G'$ as a feature 
to assign probabilities to nodes $A$ and $B$ of being friends in the original graph $G$. 
We are interested in the quantity $P(L_{AB} \mid {P}_{AB}^k(G') > x)$, 
for different values of $k$. Using Bayes' theorem, we have that: 

\begin{align}
P( L_{AB} \mid P_{AB}^k(G') > x) & = \frac{P(P_{AB}^k(G') > x \mid L_{AB}) \cdot P(L_{AB})}{P(P_{AB}^k(G') > x)} 
\label{eqn:sim-privacy}
\end{align}

Also, we have that: 

\begin{align}
P(P_{AB}^k(G') > x) = P(P_{AB}^k(G') > x \mid L_{AB}) \cdot P(L_{AB}) + \nonumber \\ P(P_{AB}^k(G') > x \mid \overline{L}_{AB}) \cdot P(\overline{L}_{AB}) 
\end{align}

where $\overline{L}_{AB}$ denotes the event when vertices $A$ and $B$ don't have a link. 

To get some insight, we computed the probability distributions $P(P_{AB}^k(G') | L_{AB})$ and $P(P_{AB}^k(G') | \overline{L}_{AB})$ using 
simulations on our real world social network topologies. Figure~\ref{fig:privacy-median} depicts the median value of the respective 
probability distributions, as a function of parameter $k$, for different perturbation parameters $t$, using the Facebook interaction graph. 
We can see that the median transition probabilities are higher for the scenario where two users are originally friends (as opposed to 
the setting where they are not friends). We can also see that the difference between median transition values in the two 
scenarios is higher when (a) the perturbation parameter $t$ is small, and (b) the parameter $K$ (random walk length) is small. 
This difference is related to the privacy of the link - larger the difference, greater the loss in privacy. The insight from this 
figure is that for small perturbation parameters, the closer two nodes are to each other in the perturbed graph, the higher their 
chances of being friends in the original graph. %
Next, we consider the full distribution of transition probabilities (as opposed to only the median values discussed above). 
Figure~\ref{fig:privacy-fulldist}(a) depicts the complimentary cumulative distribution ($P_{AB}^k(G') > x)$ using perturbation parameter $t=2$ 
for the Facebook interaction graph. Again, we can see that when $A$ and $B$ are friends, they have higher transition probability to each other 
in the perturbed graph, compared to the scenario when $A$ and $B$ are not friends. Moreover, as the value of $k$ increases, the gap between the 
distributions becomes smaller. Similarly, as the value of $t$ increases, the gap between the distributions becomes smaller, as depicted 
in Figures~\ref{fig:privacy-fulldist}(b-c). We can use these simulation results to compute the probabilities in Equation~\ref{eqn:sim-privacy}. 
One way to analyzing the lower bound on link privacy would be to choose a uniform prior for vertices being friends in the 
original graph, i.e., $P(L_{AB})=\frac{m}{\binom{n}{2}}$. Correspondingly, $P(\overline{L}_{AB}) = 1-P(L_{AB})$.

\begin{figure}[!tp]
\centering
\includegraphics[width=0.4\textwidth]{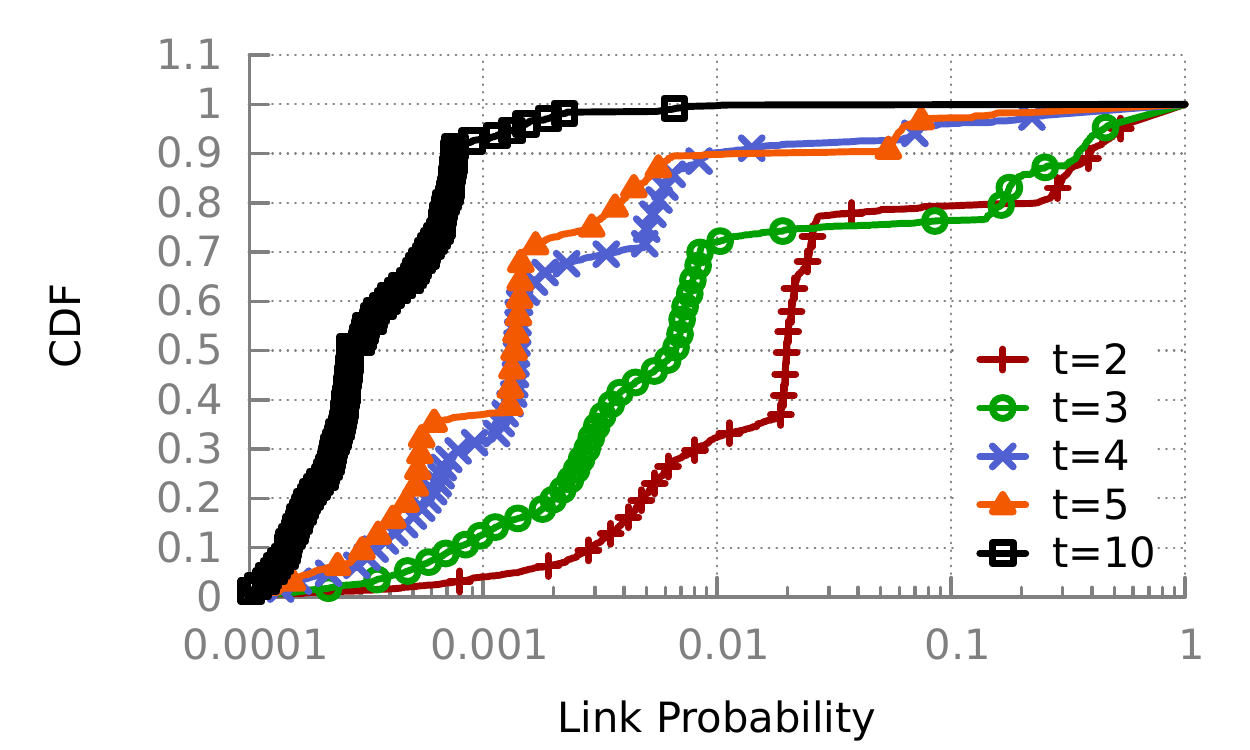}
\vspace{-0.1in}
\caption{{\em Cumulative distribution of link probability $P(L| G',H)$} for the Facebook interaction graph. Smaller probabilities offer higher 
privacy protection}
\label{fig:privacy-lowerbound}
\vspace{-0.1in}
\end{figure}

Figure~\ref{fig:privacy-lowerbound} depicts the cumulative distribution of link probability computed using the above methodology, for the 
Facebook interaction graph. We can see the variance in privacy protection received by links in the topology. Using $t=2$, $80\%$ of the links 
have a link probability of less than $0.1$. Increasing perturbation parameter $t$ significantly improves performance; using $t=3$, 
$95\%$ of the links have a link probability less than $0.1$, and $90\%$ of the links have a link probability less than $0.01$. 
In summary, our analysis shows that a given level of utility translates into a lowerbound on privacy offered by the perturbation mechanism.

\subsection{Risk based formulation for link privacy}

\begin{figure*}[!tp]
\centering
\mbox{
\hspace{-0.2in}
\hspace{-0.12in}
\begin{tabular}{c}
\psfig{figure=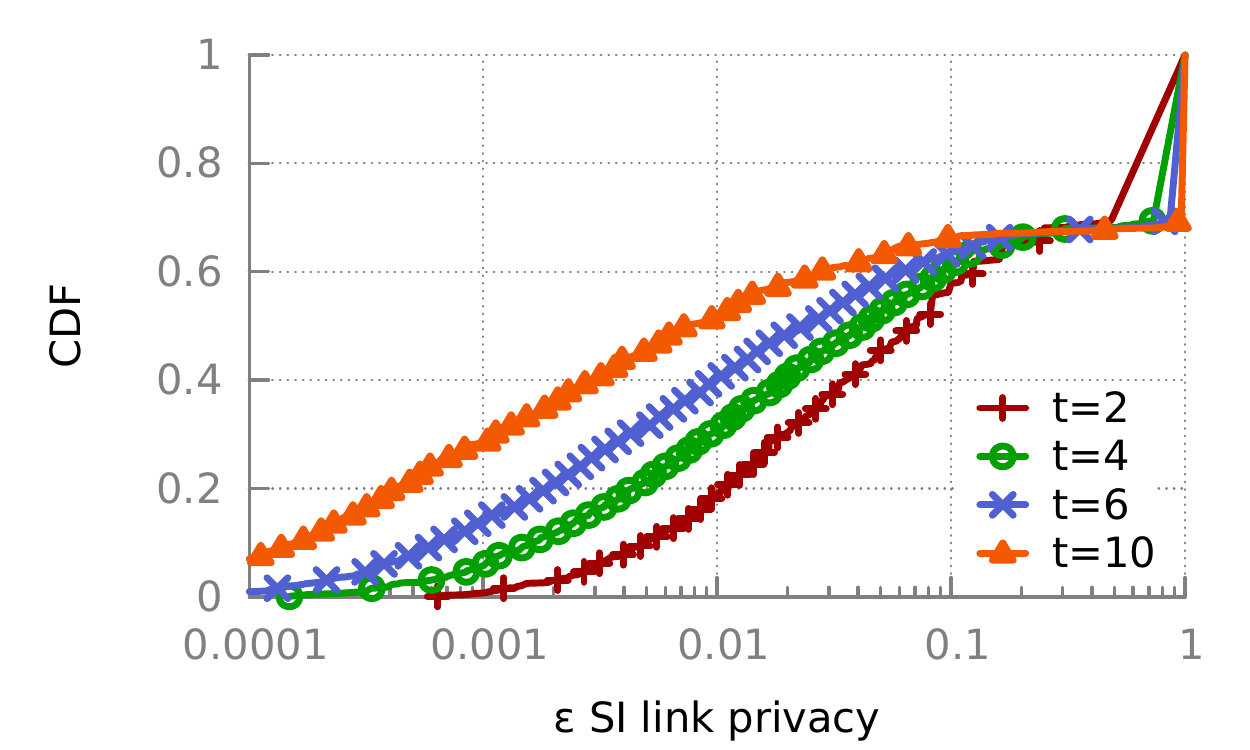,width=0.33 \textwidth}\vspace{-0.00in}\\
{(a)}
\end{tabular}
\begin{tabular}{c}
\psfig{figure=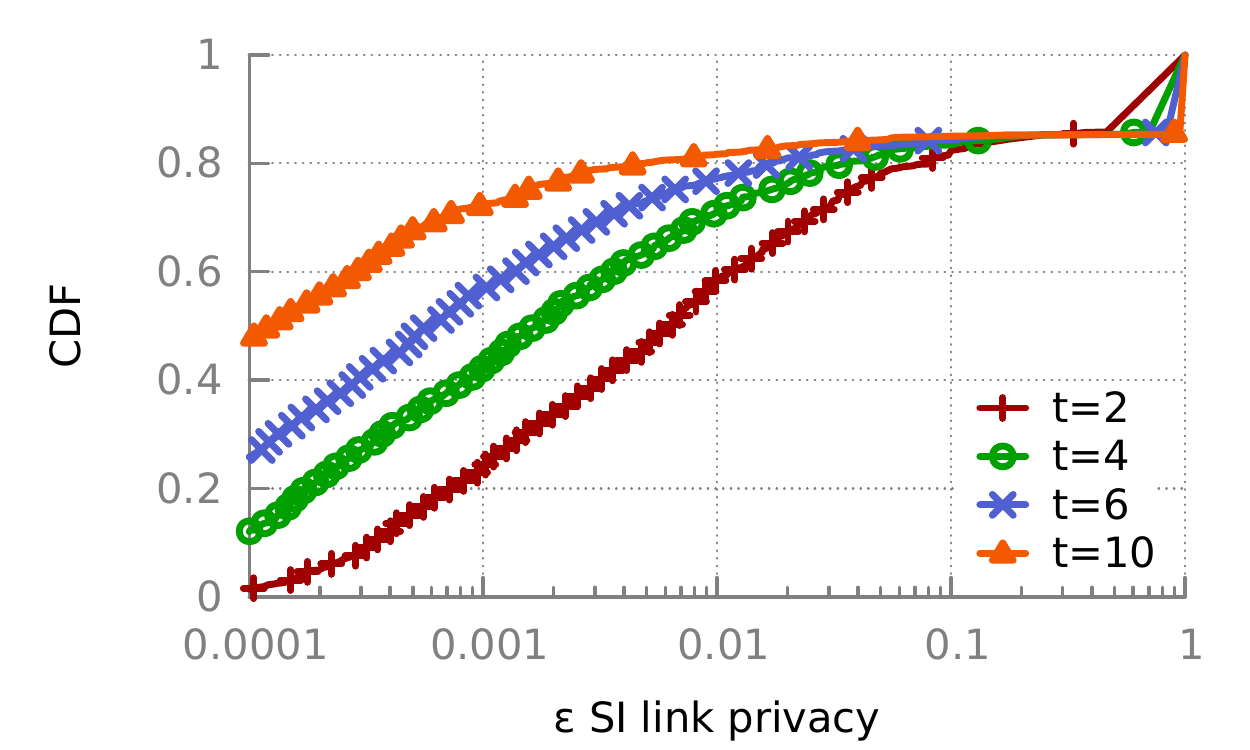,width=0.33 \textwidth}\vspace{-0.00in}\\
{(b)}
\end{tabular}
}
\vspace{-0.1in}
\caption{{\em Cumulative distribution of $\epsilon$ SI link privacy} for (a) Facebook interaction graph and (b) Facebook link graph. Note that the SI privacy definition is not applicable to links for which 
either of the vertex has degree of 1, since removing that link disconnects the graph.}
\label{fig:si-privacy}
\vspace{-0.1in}
\end{figure*}

\begin{figure*}[!tp]
\centering
\mbox{
\hspace{-0.2in}
\hspace{-0.12in}
\begin{tabular}{c}
\psfig{figure=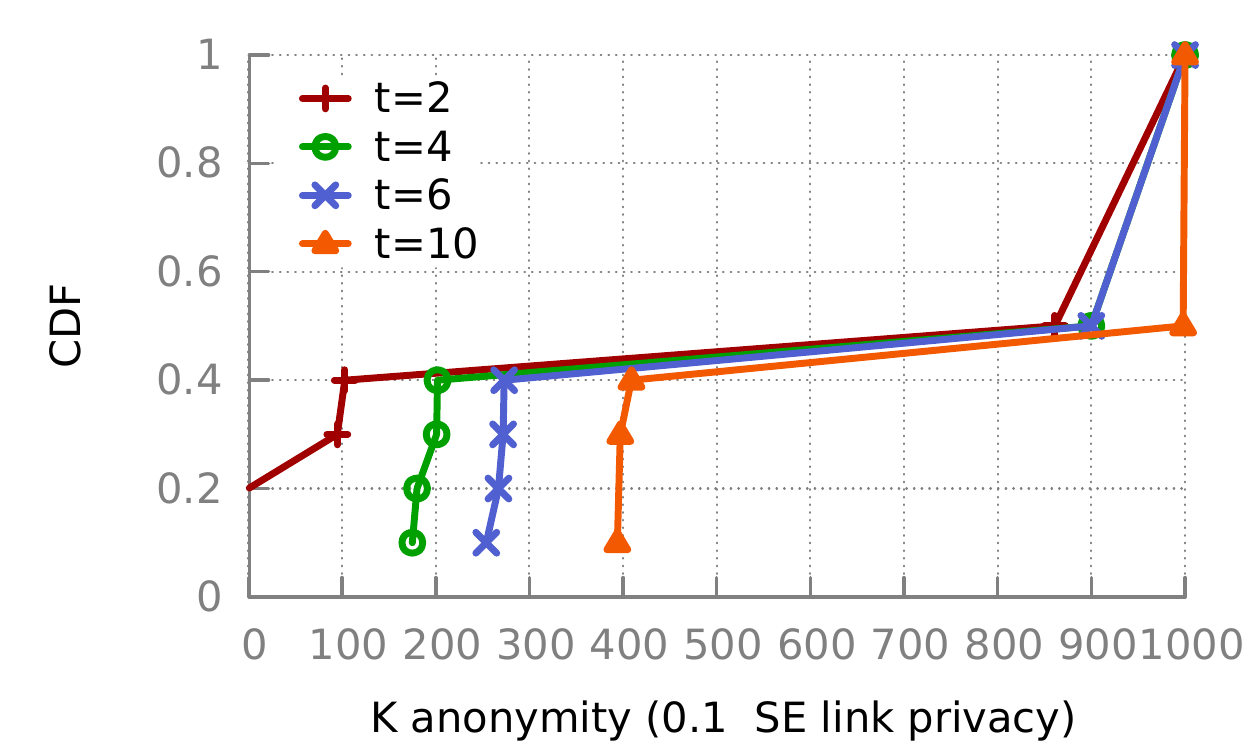,width=0.33 \textwidth}\vspace{-0.00in}\\
{(a)}
\end{tabular}
\begin{tabular}{c}
\psfig{figure=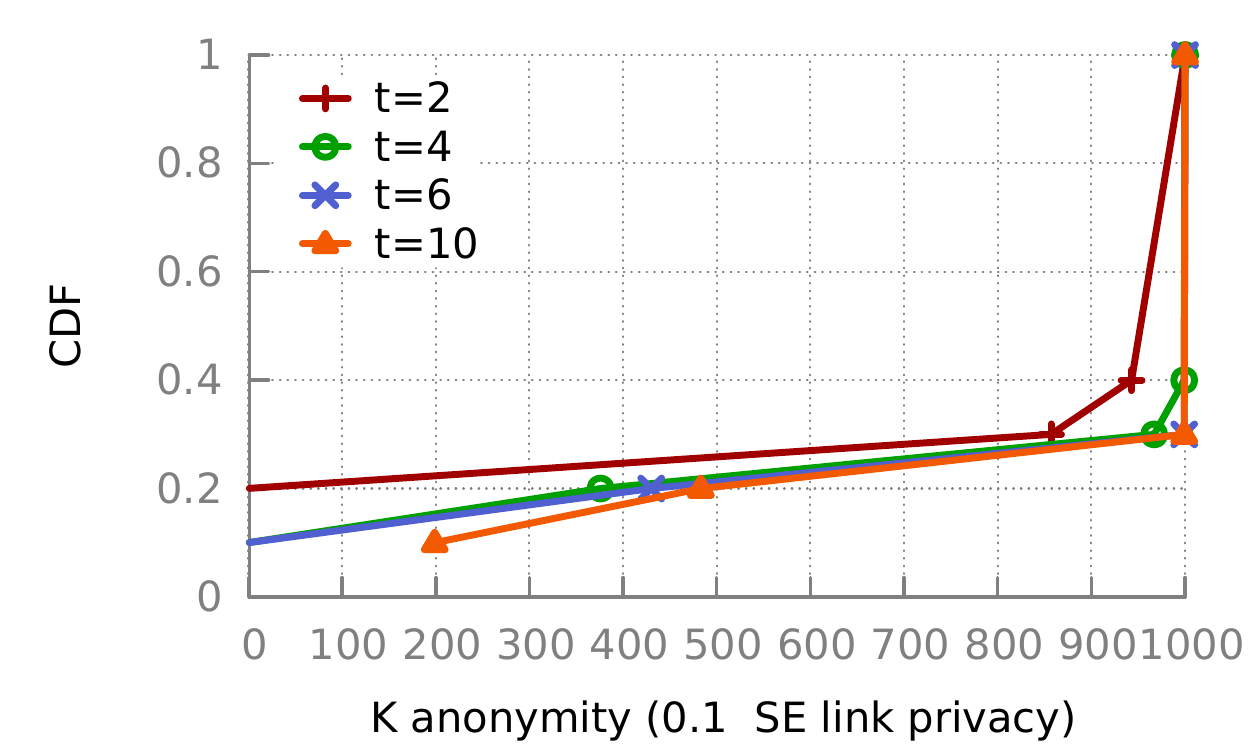,width=0.33 \textwidth}\vspace{-0.00in}\\
{(b)}
\end{tabular}
}
\vspace{-0.1in}
\caption{{\em Cumulative distribution of K-anonymity set size for $\epsilon=0.1$ SE link privacy} using  (a) Facebook interaction graph and (b) Facebook friendship graph.}
\label{fig:se-privacy}
\vspace{-0.1in}
\end{figure*}

The dependence of the Bayesian inference based privacy definitions on the prior of the adversary 
motivates the formulation of new metrics that are not specific to adversarial priors. We first 
illustrate a preliminary definition of link privacy (which is unable to account for links to degree 1 
vertices), and then subsequently improve it. 

\begin{mydef}
 We define the structural %
impact (SI) of a link L in graph G with respect to a 
perturbation mechanism $M$, as the statistical distance between probability distributions 
of the output of the perturbation mechanism (i.e., the set of possible perturbed graphs)
when using (a) the original graph G as an input to the perturbation mechanism, and (b) 
the graph G - L as an input to the perturbation mechanism.  Let $P(G'=M(G))$ denote the 
probability \emph{distribution} of perturbed graphs $G'$ using the perturbation mechanism M and 
input graph G. A link has $\epsilon$ SI-privacy if the statistical distance 
$|| P(G'=M(G)) - P(G'=M(G-L))|| < \epsilon$.
\end{mydef}

Intuitively, if the SI of a link is high, then the perturbation process leaks more 
information about that link. On the other hand, if the SI of a link is low, then 
the perturbed graph G' leaks less information about the link. 

As before, we consider the total variation distance as our distance metric between probability 
distributions. Observe that the links in graphs G' are samples from the probability distribution 
$P_v^t(G)$, for $v \in V$. So we can bound the difference in probability distributions of perturbed 
graphs generated from G and G-L, by the worst case total variation distance between probability 
distributions $P_v^t(G)$ and $P_v^t(G-L)$, over all $v \in V$, i.e. $VU_{max}(G,G-L,t)$. 

Note that our preliminary attempt at defining link privacy above does not accommodate links 
where either of its endpoints have degree 1, since removal of that link disconnects the graph. 
This is illustrated in Figure~\ref{fig:si-privacy}, which depicts the cumulative distribution of 
$\epsilon$ SI link privacy values. We can see that approximately $30\%$ and $20\%$ of links in 
the Facebook interaction graph and friendship graphs respectively do not receive any privacy protection 
under this definition (because they are connected to degree 1 vertices). 
For the remaining links, we can see a similar qualitative trend as before: 
increasing the perturbation parameter $t$ significantly improves link privacy. To overcome the 
limitations of this definition, we propose an alternate formulation based on the notion of 
link equivalence.

\begin{mydef}
We define the structural equivalence (SE) between a link L' with a link L in graph G with respect 
to a perturbation mechanism M, as the statistical distance between probability distributions 
of the output of the perturbation mechanism, when using (a) the original graph G as an input 
to the perturbation mechanism, and (b) the graph G - L + L' as the input to the perturbation 
mechanism. A link L has K-anonymous $\epsilon$ SE privacy, if there exist at least K links L', 
such that $||P(G' = M(G)) - P(G'=M(G-L+L'))|| < \epsilon$. 
\end{mydef}

Observe that this definition of privacy is able to account for degree 1 vertices, since they can become 
connected to the graph via the addition of other alternate links L'. For our experiments, we limited 
the number of alternate links explored to $1000$ links (for computational tractability). 
Figure~\ref{fig:se-privacy} depicts the cumulative distribution of anonymity set sizes for links 
using $\epsilon=0.1$ for the Facebook interaction and friendship graphs. For 
$t=2$ we see a very similar trend as in the previous definition, where a non-trivial fraction of links 
do not receive much privacy. Unlike the previous setting however, as we increase the 
perturbation parameter $t$, the anonymity set size for even these links improves significantly. 
Using $t=10$, $50\%$ and $70\%$ of the links in the interaction and friendship graphs respectively, 
achieved the maximum tested anonymity set size of $1000$ links.

Connection with differential privacy~\cite{differential-privacy}: there is an interesting 
connection between our risk based privacy definitions and differential privacy. A  
differentially private mechanism adaptively adds noise to the system to ensure that \emph{all} 
user records in a database (links in our setting) receive a threshold privacy protection. In 
our mechanism, we are adding a fixed amount of noise (governed by the perturbation parameter $t$), 
and observing the variance in $\epsilon$ and anonymity set size values. 

\section{Applications}
\label{sec:applications}

In this section, we demonstrate the applicability of our perturbation 
mechanism to social network based systems. 

\subsection{Secure Routing}

Several peer-to-peer systems perform routing over social graph to improve 
performance and security, such as Sprout~\cite{sprout},Tribler~\cite{tribler}, 
Whanau~\cite{whanau:nsdi10} and X-Vine~\cite{x-vine}. 
Next, we analyze the impact of our perturbation algorithm on the utility of Sprout. 

\subsubsection{Sprout}
Sprout is a routing system that enhances the security of conventional DHT routing by leveraging 
trusted social links when available. For example, when routing towards a DHT key, if leveraging 
a social link makes forward progress in the DHT namespace, then the social link is used for 
routing. The authors of Sprout considered a linear trust decay model, where a users' social 
contacts are trusted with probability $f$ (set to $0.95$ in ~\cite{sprout}), and the trust in 
other users decreases as a linear function of the shortest path distance between the users (a 
decrement of $0.05$ is used in ~\cite{sprout}). The decrement is bounded by a value that reflects 
the probability of a random user in the network being trusted (set to $0.6$ in ~\cite{sprout}).

The reliability of a DHT lookup in sprout is defined as the probability of all users in the path 
being trusted. Table~\ref{tbl:sprout} depicts the reliability of routing using a single DHT lookup in Sprout, 
for the original and the perturbed topologies. We used Chord as the underlying DHT system. 
For each perturbation parameter, our results were averaged over $100$ perturbed topologies. 
We can see that as the perturbation parameter increases, the utility of application decreases. 
For example, using the original Facebook interaction topology, the reliability of a single DHT path 
in sprout is $0.11$, which drops to $0.10$ and $0.096$ when using perturbed 
topologies with parameters $t=5$ and $t=10$ respectively. However, even when using $t=10$, the performance 
is better as compared with the scenario where social links are not used for routing (Chord's baseline 
performance of $0.075$). We can see similar results for the Facebook friendship graph as well.

\begin{table}[!t]
\caption{Path reliability using Sprout for a linear trust decay model.}
\label{tbl:sprout}
\footnotesize
\centering
\begin{tabular}{|l|c|c|c|c|l}
\hline
 \multicolumn{2}{|c|}{\bf Facebook interaction graph} & \multicolumn{2}{|c|}{\bf Facebook friendship graph} \\
        { Mechanism} & {Reliability } & { Mechanism} & {Reliability}  \\ \hline \hline
 Original & 0.110 & Original & 0.140  \\ \hline
 t=3 & 0.101 & t=3 & 0.126  \\ \hline
 t=5 & 0.101 & t=5 & 0.121 \\ \hline
 t=10 & 0.096 & t=10 & 0.118  \\ \hline
 Chord & 0.075 & Chord & 0.072  \\ \hline
\end{tabular}
\vspace{-0.2in}
\end{table}

\subsection{Sybil detection}

\begin{figure*}[htp]
\centering
\mbox{
\hspace{-0.2in}
\hspace{-0.12in}
\begin{tabular}{c}
\psfig{figure=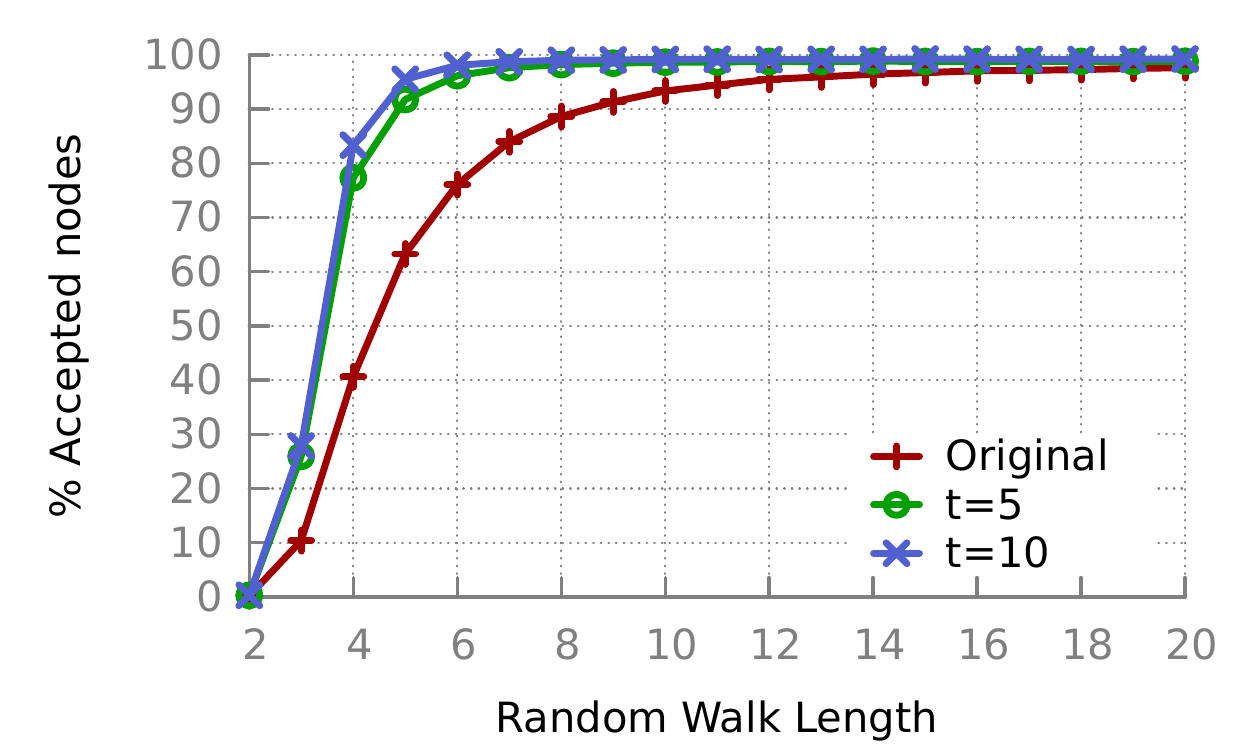,width=0.33 \textwidth}\vspace{-0.00in}\\
{(a)}
\end{tabular}
\begin{tabular}{c}
\psfig{figure=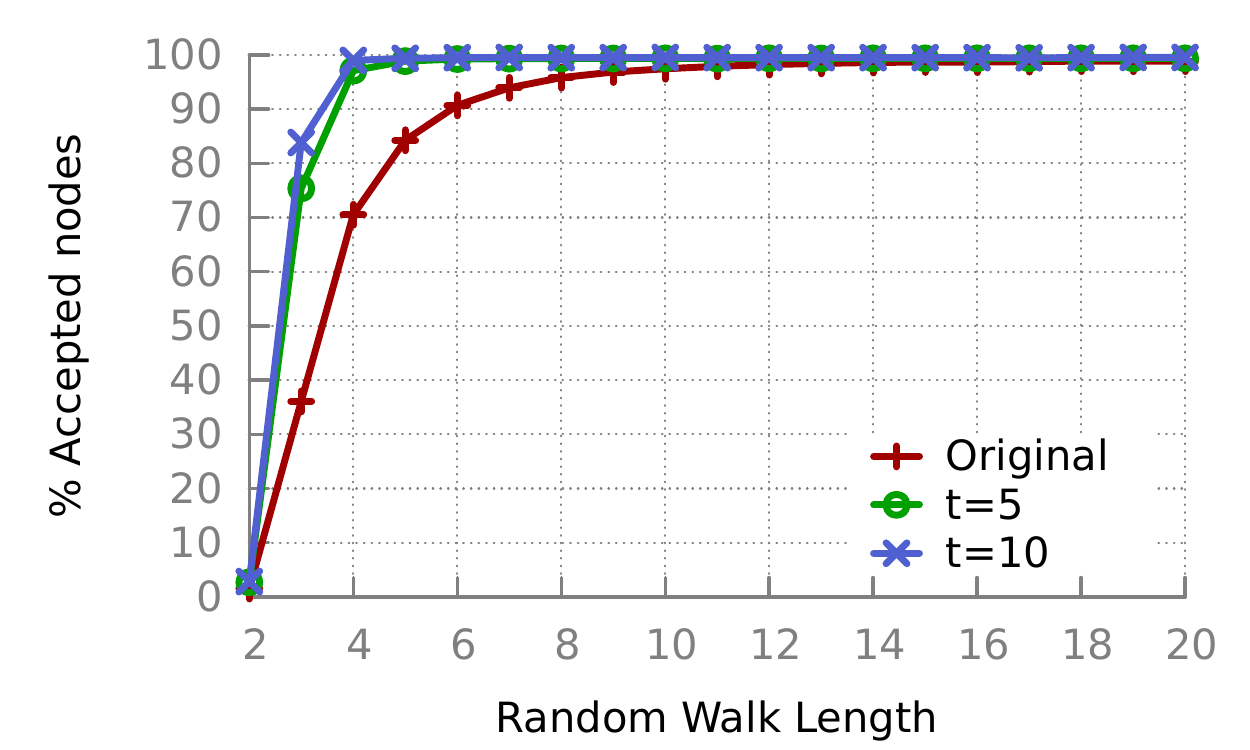,width=0.33 \textwidth}\vspace{-0.00in}\\
{(b)}
\end{tabular}
}
\vspace{-0.05in}
\caption{{\em SybilLimit \% validated honest nodes} as a function of SybilLimit random route length for (a) Facebook interaction graph (b) Facebook wall post graph. We can see that for a false positive rate 
of $1-2\%$, the required random route length for our perturbed topologies is a factor of $2-3$ smaller as compared with the original topology. Random route length is directly proportional to number 
of Sybil identities that can be inserted in the system.}
\label{fig:sybillimit-fp}
\vspace{-0.1in}
\end{figure*}

\begin{figure*}[htp]
\centering
\mbox{
\hspace{-0.2in}
\hspace{-0.12in}
\begin{tabular}{c}
\psfig{figure=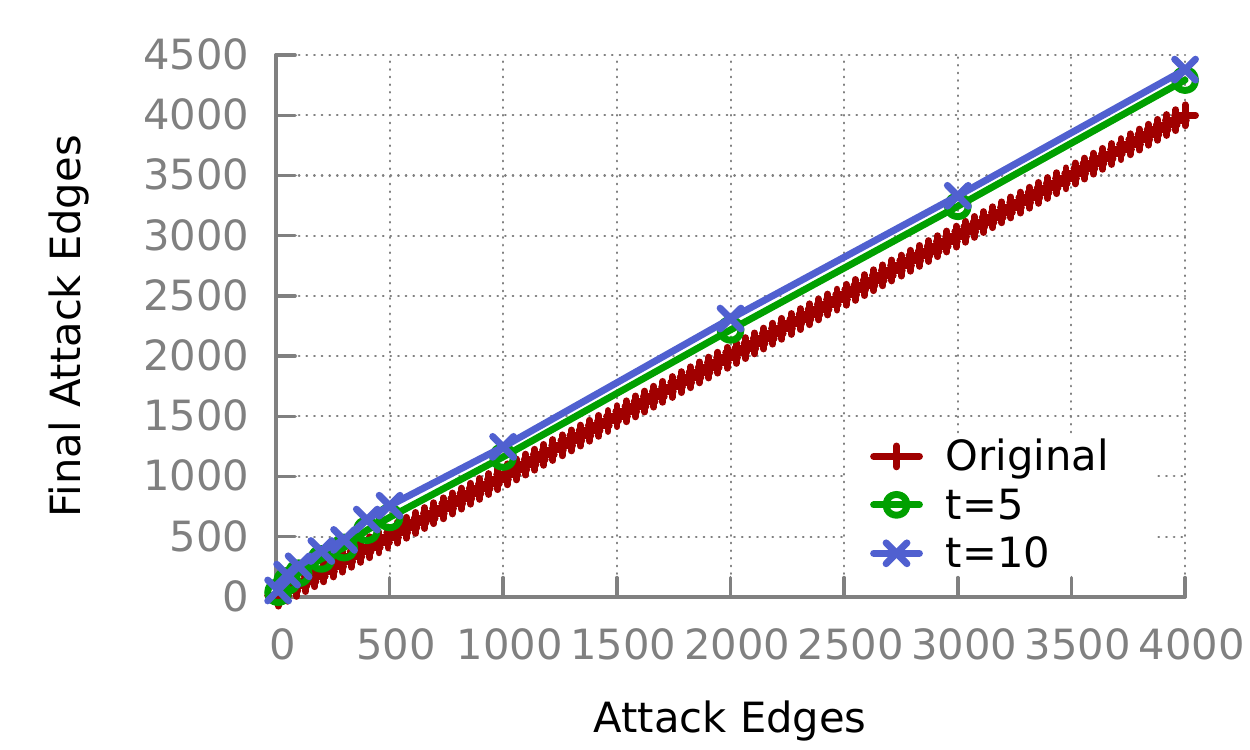,width=0.33 \textwidth}\vspace{-0.00in}\\
{(a)}
\end{tabular}
\begin{tabular}{c}
\psfig{figure=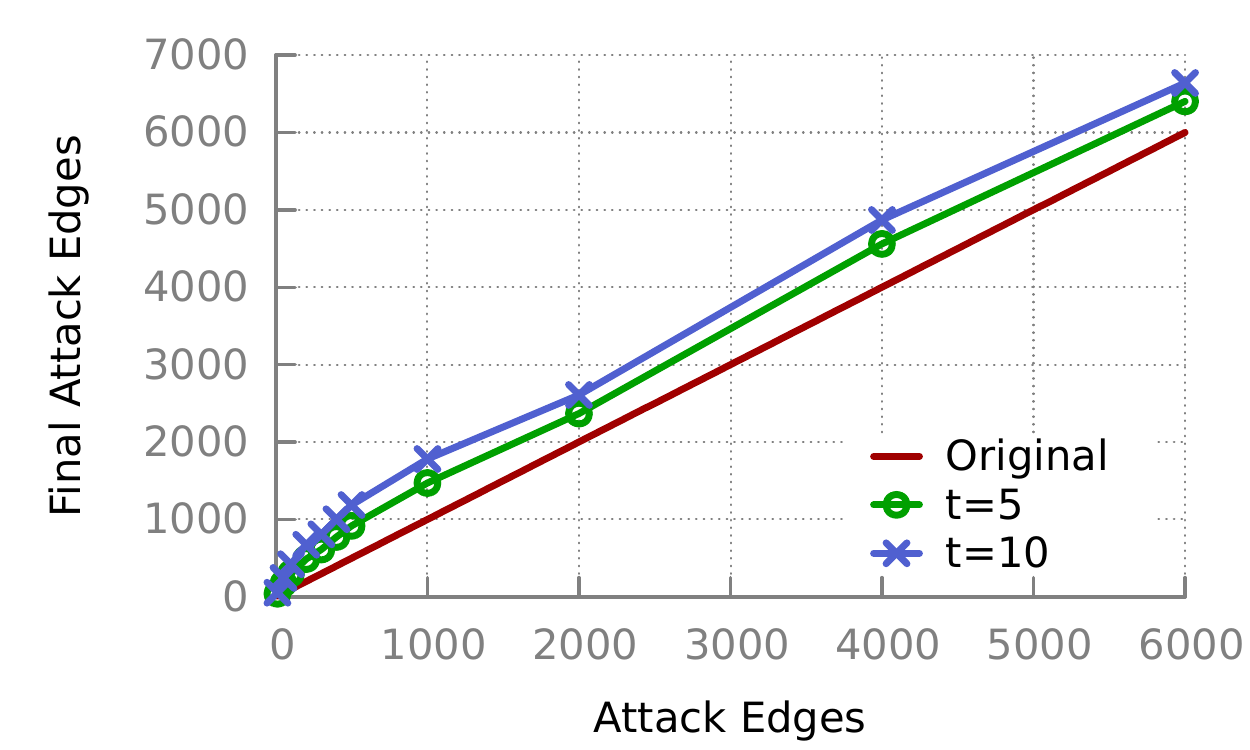,width=0.33 \textwidth}\vspace{-0.00in}\\
{(b)}
\end{tabular}
}
\vspace{-0.05in}
\caption{{\em Attack edges in perturbed topologies} as a function of attack edges in the original topology. We can see that there is a marginal increase in the number of attack edges in 
the perturbed topologies. The attack edges are directly proportional to the number of Sybil identities that can be inserted in the system.}
\label{fig:sybillimit-fn}
\vspace{-0.1in}
\end{figure*}

\begin{figure*}[htp]
\centering
\mbox{
\hspace{-0.2in}
\hspace{-0.12in}
\begin{tabular}{c}
\psfig{figure=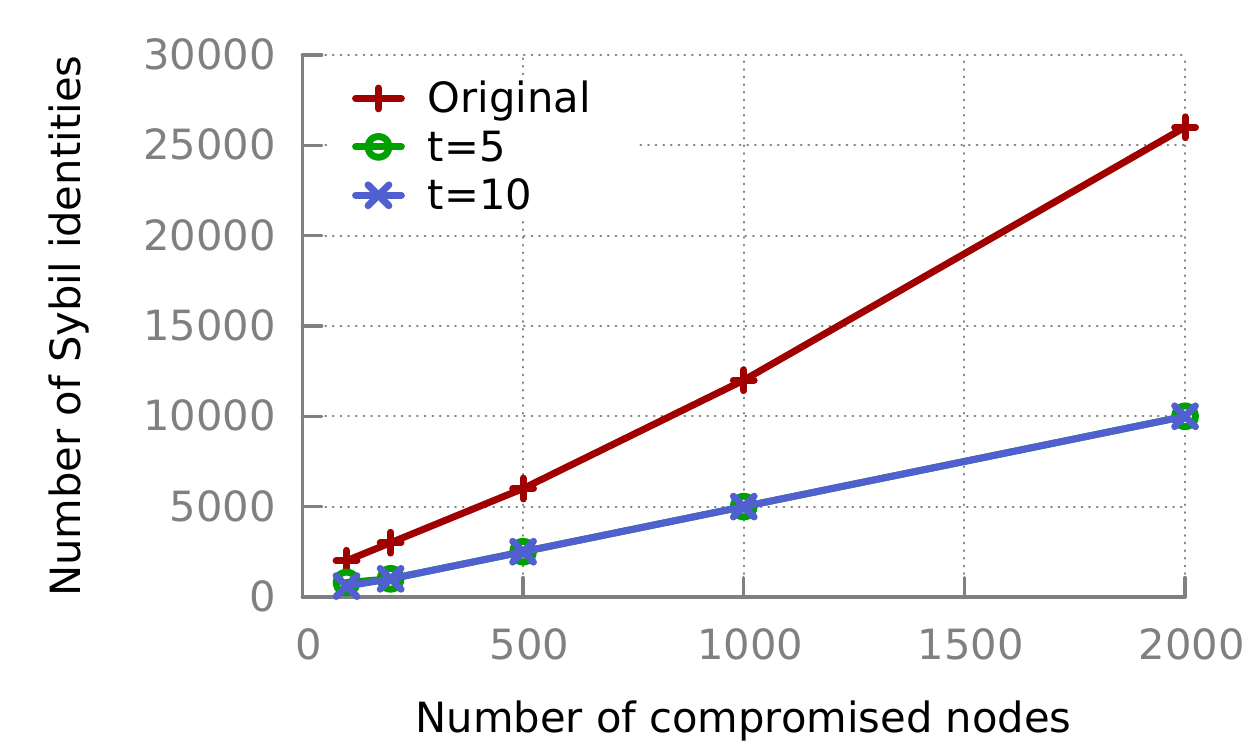,width=0.33 \textwidth}\vspace{-0.00in}\\
{(a)}
\end{tabular}
\begin{tabular}{c}
\psfig{figure=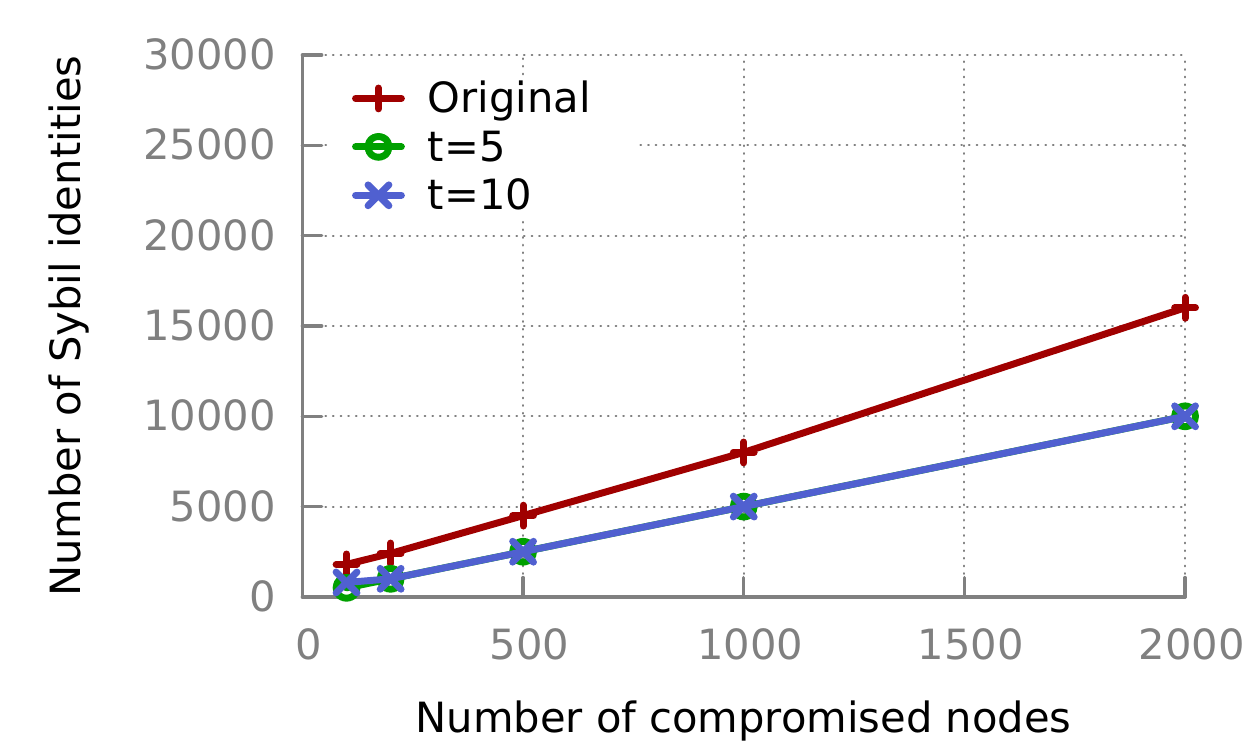,width=0.33 \textwidth}\vspace{-0.00in}\\
{(b)}
\end{tabular}
}
\vspace{-0.05in}
\caption{{\em Number of Sybil identities accepted by SybilInfer} as a function of number of compromised nodes 
in the original topology. We can see that there is a significant decline in the number of Sybil identities using 
our perturbation algorithms.}
\label{fig:sybilinfer}
\vspace{-0.1in}
\end{figure*}

In a Sybil attack~\cite{sybil}, a single user or an entity emulates the 
behavior of multiple identities in a system. Many systems 
are built on the assumption that there is a bound on the 
fraction of malicious nodes in the systems. By being able 
to insert a large number of malicious identities, 
an attacker can compromise the security properties of such 
systems. Sybil attacks are a powerful threat against both 
centralized as well as distributed systems, such as 
reputation systems, consensus and leader election protocols, 
peer-to-peer systems, anonymity systems, and recommendation 
systems. 

A wide body of recent work has proposed to leverage 
trust relationships embedded in social networks for 
detecting Sybil attacks~\cite{sybilguard,sybillimit,
sybilinfer,mohaisen:infocom11,tran:infocom11}. 
However, in all of these mechanisms, an adversary can 
learn the trust relationships in the social network. 
Next, we show that our perturbation algorithm preserves 
the ability of above mechanisms to detect Sybils, 
while protecting the privacy of the social network trust relationships. 

\subsubsection{SybilLimit}
We use SybilLimit~\cite{sybillimit} as a representative 
protocol for Sybil detection, since it is the most popular 
and well understood mechanism in the literature. SybilLimit 
is a probabilistic defense, and has both false positives 
(honest users misclassified as Sybils) and false negatives 
(Sybils misclassified) as honest users. 

We compared the performance of running SybilLimit on original 
graph, and on the transformed graph, for varying perturbation 
parameters. For each perturbation parameter, we averaged the 
results over $100$ perturbed topologies. Figure~\ref{fig:sybillimit-fp} 
depicts the percentage of honest users validated by SybilLimit using 
the original graph and perturbed graphs, as a function of 
the SybilLimit random route length (application parameter $w$). 
For any value of the SybilLimit random route length, the 
percentage of honest nodes accepted by SybilLimit is higher 
when using perturbed graphs. This is because our perturbation 
algorithms reduce the mixing time of the graph. In fact, for 
a false positive percentage of $1-2\%$ (99-98\% accepted nodes), 
the required length of the SybilLimit random routes is a factor 
of $2-3$ smaller as compared to the original topology. 
SybilLimit random routes are directly proportional to the number 
of Sybil identities that can be inserted in the system.  

This improvement in the number of accepted honest nodes (reduction in 
false positives) comes at the cost of increase in the number of 
attack edges between honest users and the attacker. Figure~\ref{fig:sybillimit-fn} 
depicts the number of attack edges in the perturbed topologies, for varying 
values of attack edges in the original graph. We can see that as expected, 
there is a marginal increase in the number of attack edges in the perturbed topologies.

Remark: The number of Sybil identities that an adversary can insert is given by $S=g' \cdot w'$. We note 
that the marginal increase in the number of attack edges $g'$ is offset by the reduced length of the 
SybilLimit random route parameter $w$ (for any desired false positive rate), thus achieving 
comparable performance with the original social graph. In fact, for perturbed topologies, since 
the required random route length in SybilLimit is halved for a false positive rate of $1-2\%$, 
and the increase in the number of attack edges is less than a factor of two, the Sybil defense 
performance has \emph{improved} using our perturbation mechanism. Thus for Sybil 
defenses, our perturbation mechanism is of independent interest, even without considering the 
benefit of link privacy. We further validate this conclusion using another state-of-art detection 
mechanism called SybilInfer.  

\subsubsection{SybilInfer}
We compared the performance of running SybilInfer on real and perturbed topologies. 
Figure~\ref{fig:sybilinfer} depicts the optimal number of Sybil identities that an 
adversary can insert before being detected by SybilInfer, as a function of real 
compromised and colluding users. Again, we can see that the performance of perturbed 
graphs is better than using original graphs. This is due to the interplay between 
mixing time of graphs and the number of attack edges in the Sybil defense application. Our perturbation 
mechanism significantly reduces the mixing time of the graphs, while suffering only a 
marginal increase in the number of attack edges. It is interesting to see that the 
advantage of using our perturbation mechanism is less in Figure~\ref{fig:sybilinfer}(b), 
as compared to Figure~\ref{fig:sybilinfer}(a). This is because the mixing time of the 
Facebook friendship graph is much better (as compared with the the mixing time of the 
Facebook interaction graph). Thus we conclude that our perturbation mechanism improves 
the overall Sybil detection performance of existing approaches, especially for interaction 
based topologies that exhibit relatively poor mixing characteristics.

\section{Conclusion and Future Work}

In this work, we proposed a random walk based perturbation algorithm 
that anonymizes \emph{links} in a social graph while preserving the 
graph theoretic properties of the original graph. We provided formal definitions for 
utility of a perturbed graph from the perspective of vertices, related 
our definitions to global graph properties, and empirically analyzed the 
properties of our perturbation algorithm using real world social 
networks. Furthermore, we analyzed the privacy of our perturbation mechanism from 
several perspectives (a) a Bayesian viewpoint that takes into consideration 
specific adversarial prior, and (b) a risk based view point that is independent 
of the adversary's prior. We also formalized the relationship between utility 
and privacy of perturbed graphs. Finally, we experimentally demonstrated the 
applicability of our techniques on applications such as Sybil defenses and 
secure routing. For Sybil defenses, we found that our techniques are of 
independent interest. 

Our work opens several directions for future research, including (a) investigating 
the applicability of our techniques on directed graphs (b) modeling closed form 
expressions for computing link privacy using the Bayesian framework, and (c) investigating 
tighter bounds on $\epsilon$ for computing link privacy in the risk based framework, 
and (d) modeling temporal dynamics of social networks in quantifying link privacy. 

By protecting the privacy of trust relationships, we believe that our perturbation 
mechanism can act as a key enabler for real world deployment of secure systems that 
leverage social links.

\section*{Acknowledgments}
We are very grateful to Satish Rao for helpful discussions on 
graph theory, including insights on utility metrics for 
perturbed graphs, and relating mixing times of original and 
perturbed graphs. We would also like to thank Mario Frank, Ling 
Huang, and Adrian Mettler for helpful discussions on analyzing 
link privacy, as well as their feedback on early versions of this 
work. Our analysis of SybilLimit is based on extending a simulator 
written by Bimal Vishwanath.

\bibliographystyle{acm}
\bibliography{refs}

\begin{thebibliography}{10}

\bibitem{aldous}
{\sc Aldous, D., and Fill, J.}
\newblock {\em Reversible Markov Chains and Random Walks on Graphs}.
\newblock http://www.stat.berkeley.edu/~aldous/RWG/book.html.

\bibitem{Backstrom:2007:WAT:1242572.1242598}
{\sc Backstrom, L., Dwork, C., and Kleinberg, J.}
\newblock Wherefore art thou r3579x?: anonymized social networks, hidden
  patterns, and structural steganography.
\newblock In {\em WWW\/} (2007).

\bibitem{statistical-distance}
{\sc Cha, S.-H.}
\newblock Comprehensive survey of distance/similarity measures between
  probability density functions.
\newblock {\em International Journal of Mathematical Models and Methods in
  Applied Sciences 1}, 4 (2007).

\bibitem{drac}
{\sc Danezis, G., Diaz, C., Troncoso, C., and Laurie, B.}
\newblock Drac: an architecture for anonymous low-volume communications.
\newblock In {\em PETS\/} (2010).

\bibitem{sybilinfer}
{\sc Danezis, G., and Mittal, P.}
\newblock Sybilinfer: Detecting {Sybil} nodes using social networks.
\newblock In {\em NDSS\/} (2009).

\bibitem{dey:nyu11}
{\sc Dey, R., Jelveh, Z., and Ross, K.}
\newblock Facebook users have become much more private: a large scale study.
\newblock Tech. rep., New York University, 2011.

\bibitem{sybil}
{\sc Douceur, J.}
\newblock {The {Sybil} Attack}.
\newblock In {\em IPTPS\/} (2002).

\bibitem{differential-privacy}
{\sc Dwork, C.}
\newblock Differential privacy: a survey of results.
\newblock In {\em TAMC\/} (2008).

\bibitem{MetropolisHastings}
{\sc Hastings, W.~K.}
\newblock Monte carlo sampling methods using markov chains and their
  applications.
\newblock {\em Biometrika 57}, 1 (1970).

\bibitem{Hay:2008:RSR:1453856.1453873}
{\sc Hay, M., Miklau, G., Jensen, D., Towsley, D., and Weis, P.}
\newblock Resisting structural re-identification in anonymized social networks.
\newblock {\em Proc. VLDB Endow. 1}, 1 (2008).

\bibitem{hay:umass07}
{\sc Hay, M., Miklau, G., Jensen, D., Weis, P., and Srivastava, S.}
\newblock Anonymizing social networks.
\newblock Tech. Rep. 07-09, University of Massachusetts, Amherst, 2007.

\bibitem{Korolova:2008:LPS:1458082.1458123}
{\sc Korolova, A., Motwani, R., Nabar, S.~U., and Xu, Y.}
\newblock Link privacy in social networks.
\newblock In {\em CIKM\/} (2008).

\bibitem{whanau:nsdi10}
{\sc Lesniewski-Laas, C., and Kaashoek, M.~F.}
\newblock Whanaungatanga: A {Sybil-proof} distributed hash table.
\newblock In {\em NSDI\/} (2010).

\bibitem{Liu:2008:TIA:1376616.1376629}
{\sc Liu, K., and Terzi, E.}
\newblock Towards identity anonymization on graphs.
\newblock In {\em SIGMOD\/} (2008).

\bibitem{fast-identity}
{\sc Lu, X., Song, Y., and Bressan, S.}
\newblock Fast identity anonymization on graphs.
\newblock In {\em DEXA\/} (2012).

\bibitem{sprout}
{\sc Marti, S., Ganesan, P., and Garcia-Molina, H.}
\newblock {SPROUT: P2P} routing with social networks.
\newblock In {\em P2P\&DB\/} (2004).

\bibitem{metropolis}
{\sc Metropolis, N., Rosenbluth, A.~W., Rosenbluth, M.~N., Teller, A.~H., and
  Teller, E.}
\newblock Equation of state calculations by fast computing machines.
\newblock {\em The Journal of Chemical Physics 21}, 6 (1953).

\bibitem{ostra:nsdi08}
{\sc Mislove, A., Post, A., Druschel, P., and Gummadi, K.~P.}
\newblock Ostra: leveraging trust to thwart unwanted communication.
\newblock In {\em NSDI\/} (2008).

\bibitem{x-vine}
{\sc Mittal, P., Caesar, M., and Borisov, N.}
\newblock X-vine: Secure and pseudonymous routing using social networks.
\newblock In {\em NDSS\/} (2012).

\bibitem{mohaisen:infocom11}
{\sc Mohaisen, A., Hopper, N., and Kim, Y.}
\newblock Keep your friends close: Incorporating trust into social
  network-based sybil defenses.
\newblock In {\em INFOCOM\/} (2011).

\bibitem{mohaisen:imc10}
{\sc Mohaisen, A., Yun, A., and Kim, Y.}
\newblock Measuring the mixing time of social graphs.
\newblock In {\em IMC\/} (2010).

\bibitem{nagaraja:pet07}
{\sc Nagaraja, S.}
\newblock Anonymity in the wild: mixes on unstructured networks.
\newblock In {\em PETS\/} (2007).

\bibitem{DBLP:conf/sp/NarayananS09}
{\sc Narayanan, A., and Shmatikov, V.}
\newblock De-anonymizing social networks.
\newblock In {\em IEEE Symposium on Security and Privacy\/} (2009).

\bibitem{tribler}
{\sc Pouwelse, J.~A., Garbacki, P., Wang, J., Bakker, A., Yang, J., Iosup, A.,
  Epema, D. H.~J., Reinders, M., van Steen, M.~R., and Sips, H.~J.}
\newblock Tribler: a social-based peer-to-peer system: Research articles.
\newblock {\em Concurr. Comput. : Pract. Exper. 20}, 2 (2008).

\bibitem{Rastogi:2009:RPO:1559795.1559812}
{\sc Rastogi, V., Hay, M., Miklau, G., and Suciu, D.}
\newblock Relationship privacy: output perturbation for queries with joins.
\newblock In {\em PODS\/} (2009).

\bibitem{pigmalion}
{\sc Sala, A., Zhao, X., Wilson, C., Zheng, H., and Zhao, B.~Y.}
\newblock Sharing graphs using differentially private graph models.
\newblock In {\em IMC\/} (2011).

\bibitem{kaleidoscope}
{\sc Sovran, Y., Li, J., and Subramanian, L.}
\newblock Unblocking the internet: Social networks foil censors.
\newblock Tech. rep., NYU, 2008.

\bibitem{tran:infocom11}
{\sc Tran, N., Li, J., Subramanian, L., and Chow, S.}
\newblock Optimal sybil-resilient node admission control.
\newblock In {\em INFOCOM\/} (2011).

\bibitem{sumup:nsdi09}
{\sc Tran, N., Min, B., Li, J., and Subramanian, L.}
\newblock Sybil-resilient online content voting.
\newblock In {\em NSDI\/} (2009).

\bibitem{vishwanath-wosn09}
{\sc Viswanath, B., Mislove, A., Cha, M., and Gummadi, K.~P.}
\newblock On the evolution of user interaction in {Facebook}.
\newblock In {\em WOSN\/} (2009).

\bibitem{ying:sdm08}
{\sc Ying, X., and Wu, X.}
\newblock Randomizing social networks: a spectrum preserving approach.
\newblock In {\em SDM\/} (2008).

\bibitem{sybillimit}
{\sc Yu, H., Gibbons, P.~B., Kaminsky, M., and Xiao, F.}
\newblock Sybillimit: A near-optimal social network defense against {Sybil}
  attacks.
\newblock In {\em IEEE S\&P\/} (2008).

\bibitem{sybilguard}
{\sc Yu, H., Kaminsky, M., Gibbons, P., and Flaxman, A.}
\newblock {SybilGuard: Defending} against {Sybil} attacks via social networks.
\newblock In {\em SIGCOMM\/} (2006).

\bibitem{Zheleva:2007:PPS:1793474.1793485}
{\sc Zheleva, E., and Getoor, L.}
\newblock Preserving the privacy of sensitive relationships in graph data.
\newblock In {\em PinKDD\/} (2008).

\bibitem{DBLP:conf/icde/ZhouP08}
{\sc Zhou, B., and Pei, J.}
\newblock Preserving privacy in social networks against neighborhood attacks.
\newblock In {\em ICDE\/} (2008).

\bibitem{DBLP:journals/kais/0002P11}
{\sc Zhou, B., and Pei, J.}
\newblock The {\it k}-anonymity and {\it l}-diversity approaches for privacy
  preservation in social networks against neighborhood attacks.
\newblock {\em Knowl. Inf. Syst. 28}, 1 (2011).

\end{thebibliography}

\appendix
\label{sec:appendix}

\subsection{Proof of Theorem~\ref{thm:utility-mixing}: Relating vertex utility and mixing time}
%\begin{mythm}
%Let us denote the worst case vertex utility distance between the perturbed graph G' and the original graph G by $VU_{max}(G,G',l)$, computed 
%as $VU_{max}(G,G',l) = \max_{v \in V} VU(v,G',G,\tau_G(\epsilon))$. Then we have that $\tau_{G'}(\epsilon+VU_{max}(G,G',l) \leq \tau_G(\epsilon)$. 
%\end{mythm}

%\begin{IEEEproof}
We now sketch the proof of the above theorem. From the definition of total variation distance, we can see that:

\begin{align}
||P_v^t(G') - \pi||_{tvd} \leq ||P_v^t(G')-P_v^t(G)||_{tvd} + ||P_v^t(G)-\pi||_{tvd}
\end{align}

From the definition of mixing time, we have that $\forall t \geq T_G(\epsilon)$: 

\begin{align}
 ||P_v^t(G') - \pi||_{tvd} \leq ||P_v^t(G')-P_v^t(G)||_{tvd} + \epsilon
\end{align}

Substituting $t=\tau_G(\epsilon)$ in the above equation, and taking the maximum over all vertices, we have that: 

\begin{align}
 \max_v ||P_v^{\tau_G(\epsilon)}(G') - \pi||_{tvd} & \leq  \max_v ||P_v^{\tau_G(\epsilon)}(G')- \nonumber \\ P_v^{\tau_G(\epsilon)}(G)||_{tvd} + \epsilon \nonumber \\
 \max_v ||P_v^{\tau_G(\epsilon)}(G') - \pi||_{tvd} & \leq  VU_{max}(G,G',\tau_G(\epsilon)) + \epsilon \nonumber \\
\end{align}

Finally, we have that: 

\begin{align}
\tau_{G'}(\epsilon+VU_{max}(G,G',\tau_G(\epsilon)) \leq \tau_G(\epsilon)
\end{align}

%\end{IEEEproof}

\subsection{Proof of Theorem~\ref{thm:utility-slem}: Relating vertex utility and SLEM}
%\begin{mythm}
%Let us denote the second largest eigenvalue modulus (SLEM) of transition matrix $P_G$ of graph $G$ as $\mu_G$. We can bound the 
%SLEM of a perturbed graph G' using the mixing time of the original graph, and the worst case vertex utility 
%distance between the graphs as follows: 
%\begin{align}
%1-\frac{\log n + \log(\frac{1}{\epsilon+VU_{max}(G,G',\tau_G(\epsilon)})}{\tau_G(\epsilon)}\leq \mu_{G'} \leq  \nonumber \\ 
%\frac{2 \tau_G(\epsilon)}{2 \tau_G(\epsilon) + \log(\frac{1}{2\epsilon + 2VU_{max}(G,G',\tau_G(\epsilon)}) } \nonumber
%\end{align}
%\end{mythm}
%We can also talk about the second largest eigenvalue modulus, the cheeger inequality, and the coupling inequality.

%\begin{IEEEproof}
We now sketch the proof of the above theorem. It is known that for undirected graphs, the second largest 
eigenvalue modulus is related to the mixing time of the graph as follows~\cite{aldous}:

\begin{align}
\frac{\mu_{G'}}{2(1-\mu_{G'})} \log(\frac{1}{2\epsilon}) \leq \tau_{G'}(\epsilon) \leq \frac{\log n + \log(\frac{1}{\epsilon})}{1-\mu_{G'}}
\end{align}

From the above equation, we can bound the SLEM in terms of the mixing time as follows: 

\begin{align}
1-\frac{\log n + \log(\frac{1}{\epsilon})}{\tau_{G'}(\epsilon)} \leq \mu_{G'} \leq \frac{2\tau_{G'}(\epsilon)}{2\tau_{G'}(\epsilon) + \log(\frac{1}{2\epsilon})}	
\end{align}

Replacing $\epsilon$ with $\epsilon + VU_{max}(G,G',\tau_G(\epsilon))$, we have that:

\begin{align}
1-\frac{\log n + \log(\frac{1}{\epsilon+VU_{max}(G,G',\tau_G(\epsilon))})}{\tau_{G'}(\epsilon+VU_{max}(G,G',\tau_G(\epsilon))} \leq \mu_{G'} \leq \nonumber \\ 
\frac{2\tau_{G'}(\epsilon+VU_{max}(G,G',\tau_G(\epsilon))}{2\tau_{G'}(\epsilon+VU_{max}(G,G',\tau_G(\epsilon)) + \log(\frac{1}{2\epsilon+2VU_{max}(G,G',\tau_G(\epsilon)})}	
\end{align}

Finally, we leverage $\tau_{G'}(\epsilon+VU_{max}(G,G',\tau_G(\epsilon)) \leq \tau_G(\epsilon)$ in the above equation, to get: 

\begin{align}
1-\frac{\log n + \log(\frac{1}{\epsilon+VU_{max}(G,G',\tau_G(\epsilon))})}{\tau_{G}(\epsilon)} \leq \mu_{G'} \leq \nonumber \\ 
\frac{2\tau_{G}(\epsilon)}{2\tau_{G}(\epsilon) + \log(\frac{1}{2\epsilon+2VU_{max}(G,G',\tau_G(\epsilon)})}	
\end{align}
%\end{IEEEproof}

\subsection{Proof of Theorem~\ref{thm:mixing}: Bounding Mixing time}

Observe that the edges in graph G' can be modeled as samples from the 
$t$ hop probability distribution of random walks starting from vertices in 
G. We will prove the lowerbound on the mixing time of the perturbed
graph G' by contradiction: let us suppose that the mixing time of the 
graph G' is $k < \frac{\tau_G(\epsilon)}{t}$. Then in the original graph 
G, a user could have performed random walks of length $k \cdot t$ and 
achieve a variation distance less than $\epsilon$. But $k \cdot t < \tau_G(\epsilon)$, 
which is a contradiction. Thus, we have that $\frac{\tau_G(\epsilon)}{t} \leq \tau_{G'}(\epsilon)$.

We prove an upper bound on mixing time of the perturbed graph using the notion of
graph conductance. Let us denote the number of edges across the bottleneck cut (say $S$) of the 
original topology as $g$. Observe that the $t$ hop conductance between the sets $S$ and $\overline{S}$ 
is strictly larger than the corresponding one hop conductance (since $S$ is the bottleneck cut 
in the original topology). Thus, $E(G') \geq g$. Hence the expected graph conductance is an increasing 
function of the perturbation parameter $t$, and thus $E(\tau_{G'}(\epsilon)) \leq \tau_G(\epsilon)$.

\subsection{Proof of Theorem~\ref{thm:utility-privacy}: Relating utility and privacy}
%\begin{mythm}
%Let the maximum vertex utility of the graph (over all vertices) corresponding to an application paramter 
%$l$ be $VU_{max}(G,G',l)$. Then for an two pair of vertices $A$ and $B$, we have that 
%$P(L_{AB} | G') \geq f(\delta)$, where $f(\delta)$ denotes the prior probability of two vertices being 
%friends given that they are both contained in a $\delta$ hop neighborhood, and $\delta$ is computed 
%as $\delta = \min {k: P_{AB}^k(G') - VU_{max}(G,G',k) > 0}$.
%\end{mythm}

%\begin{IEEEproof} 
From the definition of maximum vertex utility, we have that $|P_{AB}^l(G) - P_{AB}^l(G')|  \leq VU_{max}(G,G',l)$.                  

Thus, we can bound $P_{AB}^l(G)$ as follows: 
\begin{align}
 P_{AB}^l(G') - VU_{max}(G,G',l)  \leq  P_{AB}^l(G) \leq \nonumber \\ P_{AB}^l(G') + VU_{max}(G,G',l)  
\end{align} 

Thus for any value of $k$, if $P_{AB}^k(G') - VU_{max}(G,G',k) > 0$, then we have that the 
lower bound on the probability $P_{AB}^k(G) > 0$, which reveals the information that $A$ and 
$B$ are within an $k$ hop neighborhood of each other. Thus the maximum information is revealed 
when the value of $k$ is minimized, while maintaining $P_{AB}^k(G') - VU_{max}(G,G',k) > 0$, i.e., 
$k=\delta$. This gives us a lower bound on the probability of $A$ and $B$ being friends in the 
original graph: the prior probability that two vertices in a $\delta$ hop neighborhood are friends: 
$f(\delta)$. Let $m_\delta$ denote the average number of links in a $\delta$ hop neighborhood, 
and let $n_\delta$ denote the average number of vertices in a $\delta$ hop neighborhood. In the special 
case of a null prior, we have that $f(\delta) = m_{\delta}/\binom{n_{\delta}}{2}$.
%\end{IEEEproof}

%

\end{document}